\documentclass[hidelinks,onefignum,onetabnum,nohypdvips]{siamart251216}

\newsiamremark{remark}{Remark}
\newsiamremark{hypothesis}{Hypothesis}
\crefname{hypothesis}{Hypothesis}{Hypotheses}
\newsiamthm{claim}{Claim}
\newsiamremark{fact}{Fact}
\crefname{fact}{Fact}{Facts}

\title{Superfast Spectral Compressed Sensing with Quadratic Convergence
}

\author{
	Xunmeng Wu,
	Jian-Feng Cai,
	Zongben Xu,
	Zai Yang
}

\usepackage{amsopn}
\DeclareMathOperator{\diag}{diag}

\usepackage{amsmath,amssymb,bm,cite,graphicx,algorithm,algorithmic}
\usepackage{epsfig,subfigure}
\usepackage{enumitem}
\usepackage{tablefootnote,makecell,multirow}
\usepackage{mathtools}

\usepackage{booktabs}
\usepackage{makecell}

\DeclareMathOperator*{\argmin}{arg\,min}

\DeclareMathOperator{\rank}{rank}
\def\norm #1{\left\|#1\right\|}

\def\twon #1{\left\|#1\right\|_2}

\def\frobn #1{\left\|#1\right\|_{\text{F}}}

\def\abs #1{\left|#1\right|}

\def\st{\text{subject to }}
\def\bC{\mathbb{C}}

\def\bR{\mathbb{R}}
\def\bE{\mathbb{E}}
\def\bP{\mathbb{P}}

\def\m #1{\boldsymbol{#1}}

\def\cC{\mathcal{C}}
\def\cD{\mathcal{D}}

\def\cG{\mathcal{G}}
\def\cH{\mathcal{H}}
\def\cI{\mathcal{I}}

\def\cM{\mathcal{M}}
\def\cN{\mathcal{N}}
\def\cO{\mathcal{O}}
\def\cP{\mathcal{P}}

\def\cT{\mathcal{T}}

\def\equ #1{\begin{equation}#1\end{equation}}   
\def\equs #1{\begin{equation*}#1\end{equation*}}  
\def\sbra #1{\left(#1\right)}
\def\mbra #1{\left[#1\right]}
\def\lbra #1{\left\{#1\right\}}
\def\diag #1{\text{diag}#1}

\def\st {\text{ subject to }}

\begin{document}

\maketitle

\footnotetext[1]{
	Department of Mathematics, Hong Kong University of Science and Technology,
	Kowloon, Hong Kong SAR, China.
	
	School of Mathematics and Statistics, Xi'an Jiaotong University,
	Xi'an, Shaanxi, China.
	
	E-mail:
	\email{xunmeng.wu@xjtu.edu.cn},
	\email{jfcai@ust.hk},
	\email{zbxu@xjtu.edu.cn},
	\email{yangzai@xjtu.edu.cn}.
}

\begin{abstract}
	In this paper, we study the spectral compressed sensing problem of recovering a spectrally sparse signal from linear compressive measurements. Exploiting the low-rank Hankel structure associated with the signal, we formulate a nonconvex fixed-rank optimization problem with a Hankel regularization.
	We propose a Superfast Iterative Hard Thresholding (SIHT) algorithm, where each iteration consists of a tangent-space least-squares update followed by a fixed-rank hard thresholding projection. 
	SIHT is the first nonconvex algorithm with provable quadratic convergence for spectral compressed sensing. For random Gaussian measurements, it achieves exact recovery in the noiseless case and stable recovery in the noisy case with near-optimal sample complexity.
	To efficiently implement SIHT, we develop a parameterized conjugate gradient method for solving the least-squares subproblem, endowing SIHT with state-of-the-art computational and memory efficiency. 
	Numerical experiments are provided to validate our results.
\end{abstract}

\begin{keywords}
spectrally sparse signals, second-order algorithm, low-rank Hankel matrix,  noisy measurements
\end{keywords}

\begin{MSCcodes}
90C26, 65K10, 15A83, 15A29, 94A12
\end{MSCcodes}

\section{Introduction}
Spectral compressed sensing aims to reconstruct a spectrally sparse signal from a small number of linear measurements \cite{duarte2013spectral,chen2014robust,cai2018spectral,candes2014towards,tang2013compressed}. A signal $\m{x}_\star \in \bC^N$ is called $r$-spectrally-sparse if it can be expressed as a superposition of $r$ complex exponentials:
\equs{
	x_{\star,j} = \sum_{k=1}^r s_k e^{(\mathrm{i} 2\pi f_k - \tau_k) j},
}
for $j = 0,1,\dots,N-1$, where $r \ll N$, $\mathrm{i}=\sqrt{-1}$ and $f_k \in [0,1)$, $\tau_k \in \bR$, and $s_k \in \bC$ denote the frequency, damping factor, and amplitude of the $k$-th spectral component, respectively. 
Given the sensing matrix $\m{A} \in \bC^{m\times N}$ with $m < N$ and noisy measurements
\equ{
	\m{y} = \m{A}\m{x}_\star + \m{\xi}, \label{eq:measurement}
}
where $\m{\xi}$ is the noise vector, the goal is to reconstruct $\m{x}_\star$. This problem arises in a broad range of applications, including wireless channel estimation \cite{heath2016overview}, sparse bandlimited signal acquisition \cite{tropp2009beyond}, array signal processing \cite{yang2018sparse}, radar sensing \cite{heckel2017generalized}, and nuclear magnetic resonance spectroscopy \cite{qu2015accelerated}.

Directly solving
\equs{
	\min_{\m{x}\in\bC^N} \twon{\m{A}\m{x}-\m{y}}^2, \quad \text{s.t. } \m{x} \text{ is $r$-spectrally sparse},
}
is challenging since the spectral sparsity constraint is implicitly defined and difficult to handle directly in the signal domain. As the spectrally sparse structure can be characterized by the low-rank Hankel structure in the lifted matrix domain, numerous methods based on low-rank Hankel model are proposed for spectral compressed sensing  \cite{chen2014robust,andersson2014new,cai2016robust,cai2018spectral,cai2019fast,yang2024new,cai2026spectral}.
Accordingly, this leads to a fixed-rank Hankel matrix recovery formulation
\equ{
	\min_{\m{x}} \twon{\m{A}\m{x}-\m{y}}^2, \quad \st \rank(\cH \m{x}) = r, \label{eq:P1}
}
where $\cH \m{x}$ is an $n_1 \times n_2$ Hankel matrix constructed from $\m{x}\in\bC^N$ with $N=n_1+n_2-1$, whose entries are given by
$\mbra{\cH\m{x}}_{ij}=x_{i+j}$ for $i=0,\dots,n_1-1$ and $j=0,\dots,n_2-1$.

Rather than directly solving problem \eqref{eq:P1}, early methods focus on its convex relaxation, namely Hankel nuclear norm minimization (Hankel NNM) \cite{cai2016robust,chen2014robust}. It is shown in \cite{cai2016robust} that Hankel NNM achieves exact noiseless recovery and stable noisy recovery under Gaussian measurements with high probability, requiring a sample complexity of $\cO\sbra{r\log^2 N}$. This rate is optimal with respect to $r$ since the degrees of freedom (DoF) of a spectrally sparse signal is $4r$. Another convex method is atomic norm minimization (ANM) \cite{tang2013compressed,candes2014towards,heckel2017generalized}, which achieves exact recovery from $\cO\sbra{r \log N}$ Gaussian measurements when the frequencies are sufficiently separated \cite{heckel2017generalized}. ANM has also been developed for multichannel frequency estimation using Hankel-Toeplitz block matrices \cite{wu2024multichannel}. However, both Hankel NNM and ANM are formulated as semidefinite programs and typically solved via interior-point methods (IPM) or the alternating direction method of multipliers (ADMM). This leads to a high computational complexity of $\cO(N^3)$, even when first-order ADMM solvers are employed.

To improve scalability, a fast iterative hard thresholding (FIHT) algorithm \cite{wu2026lifted,cai2019fast} is proposed for directly solving problem \eqref{eq:P1}. Compared with convex methods, FIHT enjoys a per-iteration cost that scales nearly linearly with $N$. As shown in \cite{wu2026lifted}, FIHT converges linearly to the true signal with high probability from $\cO\sbra{r\log r\log^2 N}$ Gaussian measurements, which is near-optimal in $r$ up to log factors.
Despite these advantages, FIHT still suffers from two limitations. First, its convergence rate is only linear, leaving room for further acceleration. Second, its theoretical guarantees in the noisy setting remain unclear. These limitations motivate the development of an algorithm that is simultaneously more efficient and provably robust to noise.

In this paper, we propose a Superfast Iterative Hard Thresholding (SIHT) algorithm, which, to the best of our knowledge, is the first nonconvex algorithm with provable quadratic convergence for spectral compressed sensing. Our main contributions are summarized as follows:

\begin{itemize}
	\item We reformulate \eqref{eq:P1} by incorporating the Hankel constraint into the objective via a Hankel regularization term, leading to an optimization problem on the fixed-rank manifold. We show that the proposed model admits order-optimal performance guarantees (see Section \ref{sec:formulation}).
	
	\item We develop the SIHT algorithm, where each iteration consists of two steps: solving a least-squares subproblem restricted to the tangent space of the fixed-rank manifold at the current iterate, followed by a hard-thresholding projection onto the manifold. As a byproduct, FIHT \cite{cai2019fast,wu2026lifted} can be interpreted as a special case of SIHT in which the least-squares subproblem is solved by only one gradient descent step (see Section \ref{sec:alg}).
	
	\item We establish quadratic convergence guarantees for SIHT under two different initialization schemes. SIHT achieves exact recovery in the noiseless setting and stable recovery in the noisy setting. Under Gaussian measurements, the required sample complexity with the resampling initialization is near-optimal up to log factors (see Section \ref{sec:theory}).
	
	\item To efficiently implement SIHT, we develop a parameterized conjugate gradient method to solve the least-squares subproblem. This yields per-iteration cost and memory usage that scale nearly linearly with the signal length $N$. For the second step, by exploiting the low-rank structure in the tangent space, the hard-thresholding projection can also be  computed efficiently with cost scaling nearly linearly with $N$ (see Section \ref{sec:implementation}).
	
\end{itemize}

Numerical experiments in Section \ref{sec:exp} demonstrate the efficiency and robustness of SIHT. The proof of the main result is presented in Section \ref{sec:recovery}, with additional technical details deferred to the appendices. The conclusion is presented in Section \ref{sec:conclusion}.

Table \ref{table:comparison} compares the sample and computational complexities of representative methods.
Compared with the ADMM implementations of ANM and Hankel NNM, SIHT reduces the per-iteration cost from cubic to nearly linear in $N$ and the memory requirement from quadratic to linear in $N$.
Compared with FIHT, SIHT improves the iteration complexity from log to double-log while retaining the same sample complexity and memory scaling.

\begin{table}[h!]
	\centering
	\renewcommand{\arraystretch}{1.25}
	\setlength{\tabcolsep}{3.5pt}
	
	\caption{
		Complexity comparison under noiseless Gaussian measurements, with $\cO\sbra{\cdot}$ omitted.
		ANM and Hankel NNM use ADMM, and ANM requires minimum frequency separation $\Delta_f \ge 4/N$.
		Here, $\kappa$ is the condition number of the true Hankel matrix, $K_t$ is the number of CG steps in outer iteration $t$.
	}
	\label{table:comparison}
	
		\begin{tabular}{@{}lcccc@{}}
		\toprule
		Method
		& \makecell{Sample\\complexity}
		& \makecell{Iteration\\complexity}
		& \makecell{Per-iteration\\cost}
		& Memory \\
		\midrule
		
		\makecell[l]{ANM \cite{heckel2017generalized}}
		& $r\log N$
		& $1/\varepsilon$
		& $N^3$
		& $N^2$ \\
		
		\addlinespace[3pt]
		
		\makecell[l]{Hankel NNM \cite{cai2016robust}}
		& $r\log^2 N$
		& $1/\varepsilon$
		& $N^3$
		& $N^2$ \\
		
		\addlinespace[3pt]
		
		FIHT \cite{wu2026lifted}
		& $r\log(r)\log^2(N)\kappa^2$
		& $\log\sbra{1/\varepsilon}$
		& \makecell{$mN+rN\log N$\\$+r^2N$}
		& $mN+rN$ \\
		
		\addlinespace[3pt]
		
		\makecell[l]{\textbf{SIHT}\\\textbf{(this work)}}
		& $r\log(r)\log^2(N)\kappa^2$
		& {\boldmath$\log\log\sbra{1/\varepsilon}$}
		& \makecell{$K_t(mN+rN\log N)$\\$+r^2N$}
		& $mN+rN$ \\
		
		\bottomrule
	\end{tabular}
	\end{table}

\emph{Notation:}
We denote by $\bR$ and $\bC$ the sets of real and complex numbers, respectively.
For a vector $\m{x}$, $\twon{\m{x}}$ and $\diag\sbra{\m{x}}$ denote its $\ell_2$-norm and the diagonal matrix with diagonal entries given by $\m{x}$, respectively.
For a matrix $\m{X}$, $\twon{\m{X}}$ and $\frobn{\m{X}}$ denote its spectral norm and Frobenius norm, respectively. For an operator $\cM$, we denote its adjoint as $\cM^*$ and define its operator norm as $\twon{\cM} := \sup_{\frobn{\m{X}}=1} \frobn{\cM\m{X}}$. The identity operator is denoted by $\mathcal{I}$.
We use $\overline{(\cdot)}$, $(\cdot)^T$, and $(\cdot)^H$ to denote complex conjugation, transpose, and conjugate transpose, respectively. $\mathcal{CN}(0,1)$ denotes the standard complex Gaussian distribution.
$\langle \cdot, \cdot \rangle$ denotes the standard inner product, and $\langle \cdot, \cdot \rangle_\bR$ denotes its real part. Finally, $m \gtrsim N$ means that $m \ge C N$ for some universal constant $C>0$.

\section{Optimization Model and Performance} \label{sec:formulation}
\subsection{Optimization Model}
Since the Hankel operator $\cH$ is not an isometry, as in \cite{cai2018spectral,cai2019fast,wu2026lifted}, we introduce a scaling operator $\cD := (\cH^* \cH)^{1/2}: \bC^N \to \bC^N$, and then define a scaled Hankel operator 
\equs{
	\cG := \cH \cD^{-1}: \bC^N \to \bC^{n_1 \times n_2}.
}
Since $\cG^*\cG = \cD^{-1} \cH^* \cH \cD^{-1} = \cI$, the operator $\cG$ is an isometry satisfying $\frobn{\cG \m{z}} = \twon{\m{z}}$ for any $\m{z}\in \bC^N$.

Let $\m{B} := \m{A} \cD^{-1}$ and $ \m{z}_\star := \cD\m{x}_\star$. Then the measurement model \eqref{eq:measurement} becomes
\equs{
	\m{y} = \m{A} \m{x}_\star + \m{\xi} = \m{B} \m{z}_\star + \m{\xi}.
}

Moreover, let $\m{z} := \cD \m{x}$. It follows that $\m{B}\m{z} = \m{A}\m{x}$ and $\cG\m{z} = \cH\m{x}$. Then, problem \eqref{eq:P1} is equivalent to
\equ{
	\min_{\m{z}\in \bC^N} \twon{\m{B}\m{z} - \m{y}}^2, 
	\quad \st \rank\sbra{\cG \m{z}} = r. 
	\label{eq:P2}
}
By separating the fixed-rank and Hankel constraints, problem \eqref{eq:P2} is rewritten as 
\equs{
	\min_{\m{z}\in \bC^N, \m{X}\in \bC^{n_1\times n_2} } \twon{\m{B}\m{z} - \m{y}}^2, 
	\quad \st \rank\sbra{\m{X}} = r, \quad  \m{X} = \cG\m{z}.
}
It follows from $\m{X} = \cG\m{z}$ that $\cG^*\m{X} = \cG^*\cG \m{z} = \m{z}$. Then substituting $\m{z} = \cG^*\m{X}$ into the above problem yields
\equs{
	\min_{\m{X}} \twon{\m{B}\cG^*\m{X}-\m{y}}^2, 
	\quad \st \rank\sbra{\m{X}} = r, \quad \sbra{\cI-\cG\cG^*}\m{X} = \m{0}.
}

Designing an algorithm that simultaneously enforces both the fixed-rank and Hankel constraints is challenging. To address this issue, we relax the Hankel constraint by introducing a quadratic regularization term:
\equ{
	\min_{\m{X}} \twon{\m{B}\cG^*\m{X}-\m{y}}^2 + \frobn{\sbra{\cI-\cG\cG^*}\m{X}}^2, \quad \st \rank\sbra{\m{X}} = r. \label{eq:P5}
}

\subsection{Model Performance}
We establish upper bounds on the signal recovery error for models \eqref{eq:P2} and \eqref{eq:P5}, and show that these bounds are tight up to constant factors.

To this end, we first assume that the sensing matrix $\m{B}$ satisfies the lifted restricted isometry property (LRIP) \cite{wu2026lifted}, which provides a fundamental tool for theoretical analysis of low-rank Hankel recovery problem. 

\begin{definition}[{\cite[Definition 1]{wu2026lifted}}] \label{def:LRIP}
	For an integer $r\ge1$, the LRIP constant $\delta_r$ of $\m{B}$ is defined as the smallest nonnegative number satisfying
	\equ{
		\abs{
			\twon{\m{B}\cG^*\m{Z}}^2
			-
			\twon{\cG^*\m{Z}}^2
		}
		\le
		\delta_r\frobn{\m{Z}}^2
		\label{eq:LRIP_matrix}
	}
	for all $\m{Z}\in\bC^{n_1\times n_2}$ with $\rank(\m{Z})\le r$.
	We say that $\m{B}$ satisfies the LRIP of rank $r$ if $\delta_r<1$.
\end{definition}

It is shown in \cite{wu2026lifted} that LRIP holds for Gaussian sensing matrices with a sufficient number of samples, as stated below.
\begin{theorem}[{\cite[Theorem 1]{wu2026lifted}}] \label{thm:LRIP}
	Let $0 \le \delta < 1$. Assume $\m{B}\in \bC^{m\times N}$ have entries that are i.i.d.\ $\cC\cN\sbra{0,1/m}$. Then, $\m{B}$ satisfies the LRIP with $\delta_r \le \delta$ with probability at least $1- N^{- (c_s + \log N)}$, provided $m \gtrsim C \delta^{-2}  r \sbra{ c_s + \log N } \log N$, where $c_s := n_2/n_1>1$.
\end{theorem}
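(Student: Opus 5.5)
The plan is to deduce the LRIP from a uniform concentration inequality for Gaussian matrices over a set of controlled Gaussian width, instead of a direct $\varepsilon$-net over rank-$r$ matrices. A net over $\{\m{Z}:\rank(\m{Z})\le r,\ \frobn{\m{Z}}=1\}$ has log-cardinality of order $r(n_1+n_2)\approx rN$, which would force $m\gtrsim rN$ and is far too weak. The point is that only the image $\cG^*\m{Z}\in\bC^N$ enters, and the relevant set in $\bC^N$ is much smaller in the Gaussian-width sense. By homogeneity, it suffices to consider $\frobn{\m{Z}}\le 1$. So define
\equs{
T:=\lbra{\cG^*\m{Z}:\ \rank(\m{Z})\le r,\ \frobn{\m{Z}}\le 1}\subset\bC^N .
}
Since $\twon{\cG^*}\le 1$ ($\cG$ is an isometry), $T$ lies in the unit ball. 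It then suffices to show $\sup_{\m{w}\in T}\abs{\twon{\m{B}\m{w}}^2-\twon{\m{w}}^2}\le\delta$.

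\textbf{Step 1 (Gaussian width of $T$).} For $\m{g}\in\bC^N$ standard complex Gaussian,
\equs{
\sup_{\m{w}\in T}\langle \m{g},\m{w}\rangle_\bR=\sup_{\m{Z}}\langle \cG\m{g},\m{Z}\rangle_\bR\le \sqrt{r}\,\twon{\cG\m{g}},
}
using $\langle \m{M},\m{Z}\rangle\le \twon{\m{M}}\,\|\m{Z}\|_*\le\sqrt r\,\twon{\m{M}}\frobn{\m{Z}}$. It remains to bound $\bE\twon{\cG\m{g}}$. Write $\cG\m{g}=\sum_{k=0}^{N-1} g_k\,\m{E}_k/\sqrt{w_k}$, where $\m{E}_k$ is the indicator of the $k$-th anti-diagonal and $w_k$ is its length (so $\cD=\diag(\sqrt{w_k})$). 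This is a Gaussian matrix series, and the noncommutative Khintchine / matrix Gaussian series bound gives $\bE\twon{\cG\m{g}}\lesssim \sigma\sqrt{\log N}$. The variance parameter satisfies
\equs{
\sigma^2=\max\lbra{\twon{\textstyle\sum_k \m{E}_k\m{E}_k^H/w_k},\ \twon{\textstyle\sum_k\m{E}_k^H\m{E}_k/w_k}}.
}
Both sums are diagonal, and each diagonal entry is a partial harmonic-type sum $\sum_k 1/w_k$ over the anti-diagonals meeting a given row or column. This is $\lesssim \log N$, plus a contribution of order $c_s$ from the long flat stretch of anti-diagonals of length $n_1$ when $n_2>n_1$. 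Hence $w(T)^2\lesssim r\,(c_s+\log N)\log N$.

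\textbf{Step 2 (uniform concentration).} Apply a uniform deviation bound for Gaussian matrices over subsets of the sphere/ball: a Gordon-type escape result, or the Mendelson--Pajor--Tomczak-Jaegermann / Liaw--Mehrabian--Plan--Vershynin matrix deviation inequality. It states that, with $\m{B}$ having i.i.d.\ $\cC\cN(0,1/m)$ entries,
\equs{
\sup_{\m{w}\in T}\abs{\twon{\m{B}\m{w}}-\twon{\m{w}}}\lesssim \frac{w(T)+u}{\sqrt m}
}
with probability at least $1-2e^{-u^2}$. Squaring, and using $\twon{\m{w}}\le1$, gives a deviation $\lesssim (w(T)+u)/\sqrt m+(w(T)+u)^2/m$. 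Choose $u^2=(c_s+\log N)\log N$, which produces the stated failure probability $N^{-(c_s+\log N)}$. The deviation is then $\le\delta$ once $m\gtrsim\delta^{-2}r(c_s+\log N)\log N$. Rescaling back by $\frobn{\m{Z}}^2$ gives \eqref{eq:LRIP_matrix}.

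I expect the main obstacle to be Step 1, namely obtaining the sharp variance estimate for the weighted Hankel series. The weights $1/\sqrt{w_k}$ are exactly what make the row sums logarithmic rather than linear, and the dependence on $c_s$ must be tracked carefully. If the matrix-series bound loses an extra logarithm, I would instead bound $\twon{\cG\m{g}}$ by embedding the Hankel matrix in a circulant (Toeplitz) matrix, which the FFT diagonalizes. I would then control the supremum of the resulting weighted random trigonometric polynomial by a chaining or discretization argument on $[0,1)$.
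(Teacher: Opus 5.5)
Your argument is correct. The paper gives no proof of its own; it imports the result from \cite{wu2026lifted}. Elsewhere (in the proof of Lemma~\ref{lem:noise}) it borrows the key ingredient of that source, \cite[Lemma 4]{wu2026lifted}, namely $\bE\twon{\cG\m{g}}\le 2\sqrt{(c_s+\log N)\log N}$. Your Step~1 reproduces exactly this bound via the matrix Gaussian series (row sums $\lesssim c_s+\log N$, column sums $\lesssim\log N$). Combined with $\sup_{\m{w}}\langle\m{g},\m{w}\rangle_{\bR}\le\sqrt{r}\,\twon{\cG\m{g}}$ and a Gaussian-width uniform deviation bound with $u^2\asymp(c_s+\log N)\log N$, this is the same Gaussian-width route.

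Two details are worth writing out:
\begin{itemize}
\item The complex-to-real reduction. The $2m\times 2N$ real form of $\m{B}$ does not have independent rows. Instead write $\m{B}=(\m{B}_1+\mathrm{i}\m{B}_2)/\sqrt{2}$ and apply the real inequality separately to $[\m{B}_1,\,-\m{B}_2]$ and $[\m{B}_2,\,\m{B}_1]$, each of which has i.i.d.\ $\cN(0,1/m)$ entries, then average the two bounds.
\item The failure probability. Enlarge $u^2$ by an additive constant so that the resulting factor $2$ (or $4$ after the union bound) is absorbed into $N^{-(c_s+\log N)}$.
\end{itemize}
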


Then, we establish the recovery error in the following theorem.
\begin{theorem} \label{thm:upperbound}
	Assume $\m{B}$ satisfies the LRIP of rank $2r$ with constant $\delta_{2r}\in [0,1)$. Let $\widehat{\m{z}}_0$ and $\widehat{\m{X}}$ be global minimizers of \eqref{eq:P2} and \eqref{eq:P5}, respectively. Define $\widehat{\m{z}}_1 = \cG^* \widehat{\m{X}}$. Then, for $i=0,1$, we have
	\equs{
		\twon{\widehat{\m{z}}_i - \m{z}_\star} \le \frac{2}{\sqrt{1-\delta_{2r}}} \twon{\m{\xi}}.
	}
\end{theorem}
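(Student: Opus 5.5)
The plan is the standard optimality-comparison argument: compare each minimizer's objective value with that of the true signal, then use the LRIP of rank $2r$ on the difference, which has rank at most $2r$ in the lifted domain.

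\emph{Case $i=0$.} The point $\m{z}_\star$ is feasible for \eqref{eq:P2}, since $\rank(\cG\m{z}_\star)=\rank(\cH\m{x}_\star)=r$. Optimality then gives $\twon{\m{B}\widehat{\m{z}}_0-\m{y}}\le\twon{\m{B}\m{z}_\star-\m{y}}=\twon{\m{\xi}}$. Set $\m{h}=\widehat{\m{z}}_0-\m{z}_\star$. Writing $\m{B}\m{h}=(\m{B}\widehat{\m{z}}_0-\m{y})+\m{\xi}$, the triangle inequality gives $\twon{\m{B}\m{h}}\le 2\twon{\m{\xi}}$. Now put $\m{Z}=\cG\m{h}=\cG\widehat{\m{z}}_0-\cG\m{z}_\star$, which has rank at most $2r$. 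Because $\cG^*\cG=\cI$, we have $\cG^*\m{Z}=\m{h}$ and $\frobn{\m{Z}}=\twon{\m{h}}$. The lower half of \eqref{eq:LRIP_matrix} then yields $\twon{\m{B}\m{h}}^2\ge(1-\delta_{2r})\twon{\m{h}}^2$, and combining the two bounds gives the claim.

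\emph{Case $i=1$.} The feasible comparison point is $\m{X}_\star=\cG\m{z}_\star$. It has rank $r$ and zero regularization term, so its objective value is $\twon{\m{\xi}}^2$. Write $\widehat{\m{X}}=\cG\widehat{\m{z}}_1+\m{E}$ with $\m{E}=(\cI-\cG\cG^*)\widehat{\m{X}}$. Optimality gives
\equs{
\twon{\m{B}\widehat{\m{z}}_1-\m{y}}^2+\frobn{\m{E}}^2\le\twon{\m{\xi}}^2 .
}
The difficulty is that $\widehat{\m{z}}_1-\m{z}_\star=\cG^*(\widehat{\m{X}}-\m{X}_\star)$, and although $\widehat{\m{X}}-\m{X}_\star$ has rank at most $2r$, the LRIP controls $\twon{\m{B}\cG^*\m{Z}}^2$ only up to $\delta_{2r}\frobn{\m{Z}}^2$ rather than $\delta_{2r}\twon{\cG^*\m{Z}}^2$. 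So I would apply \eqref{eq:LRIP_matrix} with $\m{Z}=\widehat{\m{X}}-\m{X}_\star$ and $\m{h}=\cG^*\m{Z}=\widehat{\m{z}}_1-\m{z}_\star$. Since $\cG\m{h}\perp\m{E}$, we have $\frobn{\m{Z}}^2=\twon{\m{h}}^2+\frobn{\m{E}}^2$. The LRIP then gives
\equs{
\twon{\m{B}\m{h}}^2\ge\twon{\m{h}}^2-\delta_{2r}\sbra{\twon{\m{h}}^2+\frobn{\m{E}}^2}.
}
Let $a=\twon{\m{B}\widehat{\m{z}}_1-\m{y}}$ and $e=\frobn{\m{E}}$, so that $a^2+e^2\le\twon{\m{\xi}}^2$ and $\twon{\m{B}\m{h}}\le a+\twon{\m{\xi}}$. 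Then
\equs{
(1-\delta_{2r})\twon{\m{h}}^2\le(a+\twon{\m{\xi}})^2+\delta_{2r}e^2\le(a+\twon{\m{\xi}})^2+e^2 .
}
It remains to show $(a+\twon{\m{\xi}})^2+e^2\le 4\twon{\m{\xi}}^2$. Using $e^2\le\twon{\m{\xi}}^2-a^2$, the left side is at most $2\twon{\m{\xi}}^2+2a\twon{\m{\xi}}\le4\twon{\m{\xi}}^2$ because $a\le\twon{\m{\xi}}$. This gives $\twon{\m{h}}\le 2\twon{\m{\xi}}/\sqrt{1-\delta_{2r}}$.

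The main obstacle is the Hankel-defect term $\m{E}$ in case $i=1$. The whole argument rests on two facts: the orthogonal decomposition $\frobn{\m{Z}}^2=\twon{\m{h}}^2+\frobn{\m{E}}^2$, which uses that $\cG\cG^*$ is an orthogonal projection, and the budget $a^2+e^2\le\twon{\m{\xi}}^2$ coming from the regularizer. Together these absorb the $\delta_{2r}\frobn{\m{E}}^2$ slack exactly, with no loss in the constant.
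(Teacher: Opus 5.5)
Your proof is correct. For $i=0$ it is the same argument as the paper's. For $i=1$ your route is different and more careful than the paper's.

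The paper discards the regularizer, keeping only $\twon{\m{B}\cG^*(\widehat{\m{X}}-\cG\m{z}_\star)-\m{\xi}}\le\twon{\m{\xi}}$, and then says the $i=0$ argument finishes the proof. That argument's LRIP step used a lifted difference, $\cG\widehat{\m{z}}_0-\cG\m{z}_\star$, which had two properties at once: rank at most $2r$, and Frobenius norm exactly $\twon{\widehat{\m{z}}_0-\m{z}_\star}$.

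For $i=1$ neither natural candidate has both properties. The matrix $\cG\widehat{\m{z}}_1=\cG\cG^*\widehat{\m{X}}$ is the Hankel projection of a rank-$r$ matrix and need not have rank at most $r$. The matrix $\widehat{\m{X}}-\cG\m{z}_\star$ has low rank, but its squared Frobenius norm is $\twon{\m{h}}^2+\frobn{\m{E}}^2$. So the LRIP only gives $\twon{\m{B}\m{h}}^2\ge(1-\delta_{2r})\twon{\m{h}}^2-\delta_{2r}\frobn{\m{E}}^2$, and $\frobn{\m{E}}$ is uncontrolled once the regularizer term has been thrown away.

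Your argument supplies exactly this missing step. The orthogonal splitting $\frobn{\m{Z}}^2=\twon{\m{h}}^2+\frobn{\m{E}}^2$, together with the joint budget $a^2+e^2\le\twon{\m{\xi}}^2$, absorbs the $\delta_{2r}\frobn{\m{E}}^2$ slack and recovers the stated constant $2$. Using $a\le\twon{\m{\xi}}$ and $e\le\twon{\m{\xi}}$ separately would give only the weaker constant $\sqrt{(4+\delta_{2r})/(1-\delta_{2r})}$.

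In short, the paper's version buys brevity, while yours gives a complete proof of the $i=1$ case with the claimed constant.
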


\begin{proof}
	The proof is deferred to Appendix \ref{append:upperbound}.
\end{proof}

Theorem \ref{thm:upperbound} shows that the relaxed model \eqref{eq:P5} enjoys the same recovery guarantee as \eqref{eq:P2}.
In addition, a deterministic lower bound is provided in Theorem \ref{thm:lowerbound}. 
Together, these results show that the linear dependence of both recovery bounds on the noise norm is optimal in the deterministic worst case.

\begin{theorem} \label{thm:lowerbound}
	For any estimators $\widehat{\m{z}}$ constructed from $\m{y}$ and $\m{B}$, we have
	\equs{
		\inf_{\widehat{\m{z}}} \sup_{ \sbra{\m{B}, \m{z}_\star, \m{\xi}} \in F_r \sbra{\varepsilon} } \twon{\widehat{\m{z}} - \m{z}_\star} 
		\ge \frac{ \twon{\m \xi} }{\sqrt{2}},
	}
	where $F_r \sbra{\varepsilon} := \lbra{ \sbra{\m{B}, \m{z}_\star, \m{\xi}}: \text{$\m{B}$ satisfies LRIP, $\rank\sbra{\cG\m{z}_\star} \le r$, $\twon{\m \xi} = \varepsilon$} }$.
\end{theorem}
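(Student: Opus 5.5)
This is a minimax lower bound, and I would prove it with a standard two-point (indistinguishability) argument. The plan is to exhibit a fixed sensing matrix $\m{B}$ satisfying LRIP and two admissible triples $(\m{B},\m{z}_1,\m{\xi}_1)$, $(\m{B},\m{z}_2,\m{\xi}_2)$ in $F_r(\varepsilon)$ that produce the same observation $\m{y}=\m{B}\m{z}_1+\m{\xi}_1=\m{B}\m{z}_2+\m{\xi}_2$. Any estimator $\widehat{\m{z}}$ sees only $(\m{y},\m{B})$, so it returns the same output $\widehat{\m{z}}$ in both cases. The triangle inequality then gives
\equs{
\max_{i=1,2}\twon{\widehat{\m{z}}-\m{z}_i}\ge \tfrac12\twon{\m{z}_1-\m{z}_2}.
}
It therefore suffices to make $\twon{\m{z}_1-\m{z}_2}\ge \sqrt{2}\,\varepsilon$ while keeping $\twon{\m{\xi}_i}=\varepsilon$.

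\textbf{Construction.} I would take $\m{B}=\cI$ (so $m=N$). Then $\m{B}\cG^*\m{Z}=\cG^*\m{Z}$, so \eqref{eq:LRIP_matrix} holds with $\delta_r=0$ and LRIP is satisfied. If the class implicitly requires $m<N$, I would instead take $\m{B}$ to be a row-selection matrix scaled appropriately, or a generic matrix, and choose the difference direction in the range of $\m{B}^H$; the argument is the same. Next, I would pick $\m{z}_1=\m{0}$ and $\m{z}_2=\m{u}$ with $\rank(\cG\m{u})\le 1\le r$. A suitable choice is $\m{u}=\cD\m{x}$ with $\m{x}$ a single undamped exponential, since then $\cH\m{x}$ has rank one. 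Both $\m{z}_1$ and $\m{z}_2$ are then admissible.

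\textbf{Choosing the noise.} Set $\m{\xi}_1=\m{B}\m{u}/2$ and $\m{\xi}_2=-\m{B}\m{u}/2$. Then $\m{B}\m{z}_1+\m{\xi}_1=\m{B}\m{u}/2=\m{B}\m{z}_2+\m{\xi}_2$, so the observations coincide. Scale $\m{u}$ so that $\twon{\m{B}\m{u}}=2\varepsilon$. With $\m{B}=\cI$ this gives $\twon{\m{z}_1-\m{z}_2}=2\varepsilon$, and the bound is $\ge \varepsilon\ge \varepsilon/\sqrt2$. This is already stronger than claimed.

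\textbf{Matching the stated constant.} The $\sqrt2$ in the statement suggests the authors pair $\m{B}$ with LRIP constant near $1$: for instance, $\twon{\m{B}\m{u}}^2=2\twon{\m{u}}^2$ for the chosen direction, i.e.\ $\delta$ close to $1$. That would give exactly $\twon{\m{u}}=\sqrt2\,\varepsilon$ and the bound $\varepsilon/\sqrt2$. Since I only need a lower bound, the identity choice suffices and dominates it.

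\textbf{Main obstacle.} The only delicate point is confirming that the chosen $\m{B}$ is admissible in $F_r(\varepsilon)$ under the paper's precise definition. Concretely, LRIP as defined means $\delta_r<1$ for the relevant rank, possibly $2r$. $\m{B}=\cI$ gives $\delta=0$ for all ranks, so this should be harmless. If a constraint $m<N$ is enforced, I would use a $\m{B}$ with orthonormal-like rows. The direction $\m{u}$ then only needs $\m{B}\m{u}\ne0$, which is easy to arrange because $\m{u}$ can be chosen as any admissible rank-one signal not in $\ker\m{B}$.
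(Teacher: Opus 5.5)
Your proof is correct. It uses the same two-point indistinguishability argument and the same choice $\m{B}=\m{I}$ as the paper. The pair of hypotheses is different, and that changes the constant.

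The paper takes $\m{z}_1=\frac{\varepsilon}{\sqrt r}[\m{1}_r;\m{0}]$ and $\m{\xi}_1=\frac{\varepsilon}{\sqrt r}[\m{0};\m{1}_r]$, then swaps signal and noise ($\m{z}_2=\m{\xi}_1$, $\m{\xi}_2=\m{z}_1$). The supports are disjoint (this needs $2r\le N$), so $\m{z}_1\perp\m{z}_2$ and $\twon{\m{z}_1-\m{z}_2}=\sqrt2\,\varepsilon$. That orthogonality is the whole source of the $\sqrt2$.

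Your choice $\m{z}_1=\m{0}$, $\m{z}_2=\m{u}$, $\m{\xi}_{1,2}=\pm\m{u}/2$ makes the two noise vectors antipodal. Then $\twon{\m{z}_1-\m{z}_2}=\twon{\m{\xi}_1-\m{\xi}_2}=2\varepsilon$, the largest value compatible with $\twon{\m{\xi}_i}=\varepsilon$. This gives the sharper bound $\varepsilon$, using a rank-one signal and no support condition.

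The paper's pair has both signals of Hankel rank exactly $r$, but your idea covers that case too. With $\m{B}=\m{I}$, take $\m{z}_{1,2}=\pm\m{w}$ and $\m{\xi}_{1,2}=\mp\m{w}$, where $\rank(\cG\m{w})=r$ and $\twon{\m{w}}=\varepsilon$. Both hypotheses give $\m{y}=\m{0}$, and the bound is again $\varepsilon$.

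Your construction also works for an arbitrary fixed $\m{B}$ satisfying LRIP, e.g.\ a Gaussian $\m{B}$ with $m<N$; the paper's swap is written only for $\m{B}=\m{I}$. You should state the computation rather than saying ``the argument is the same'':
\begin{itemize}
\item for $\rank(\cG\m{u})=1$, LRIP gives $\twon{\m{B}\m{u}}^2\le(1+\delta_1)\twon{\m{u}}^2$;
\item normalizing $\twon{\m{B}\m{u}}=2\varepsilon$ then forces $\twon{\m{u}}/2\ge\varepsilon/\sqrt{1+\delta_1}>\varepsilon/\sqrt2$.
\end{itemize}

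Two side remarks in your proposal are inaccurate, though your proof does not depend on them.
\begin{itemize}
\item A matrix with $\twon{\m{B}\m{u}}^2=2\twon{\m{u}}^2$ for some $\m{u}$ with $\rank(\cG\m{u})=1$ has $\delta_1\ge1$, so it fails LRIP. It cannot be where the $\sqrt2$ comes from; as noted above, it comes from the orthogonality of the swapped pair.
\item A row-selection $\m{B}$ with $m<N$ need not satisfy LRIP. If it discards index $0$, it annihilates $\m{e}_0$, while $\cG\m{e}_0$ has rank one, again forcing $\delta_1\ge1$.
\end{itemize}
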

\begin{proof}
	The proof is deferred to Appendix \ref{append:lowerbound}.	
\end{proof}

\section{Algorithm} \label{sec:alg}
In this section, we propose an efficient algorithm to solve \eqref{eq:P5}.
\subsection{SIHT Algorithm} \label{subsec:SIHT}
Due to the fixed-rank constraint, problem \eqref{eq:P5} is nonconvex and NP-hard in general. Iterative hard thresholding (IHT) algorithms in compressed sensing \cite{blumensath2009iterative,foucart2011hard} provide a general framework for solving such fixed-cardinality problems and have been successfully extended to low-rank matrix recovery \cite{tanner2013normalized,wei2016guarantees}. IHT alternates between a descent step for reducing the objective and a hard thresholding step to enforce the constraint. 

Based on this framework, we propose a superfast IHT (SIHT) algorithm for solving \eqref{eq:P5}. Starting from the current iterate $\m{X}_t$ of rank $r$, each iteration consists of the following two steps. 

In the first step, inspired by Riemannian optimization \cite{absil2008optimization,luo2023low}, in \eqref{eq:P5} we approximate the rank-$r$ matrix manifold locally by its tangent space $T_t$ at $\m{X}_t$, which serves as a first-order approximation. This leads to a least squares problem constrained to $T_t$:
\equ{
	\m{W}_{t+1} = \argmin_{\m{Z}} \twon{\m{B}\cG^*\m{Z} - \m{y}}^2 + \frobn{\sbra{\cI-\cG\cG^*}\m{Z}}^2, \quad \st \m{Z}\in T_t.	\label{eq:subproblem}
}

In the second step, we project $\m{W}_{t+1} $ onto the rank-$r$ matrix manifold via the hard thresholding operator $\cT_r$, i.e., the best rank-$r$ approximation obtained by truncated singular value decomposition (SVD). 

The complete SIHT algorithm is summarized in Algorithm \ref{alg:SIHT}. The method is termed superfast as it achieves state-of-the-art efficiency in both convergence rate and per-iteration cost. Specifically, SIHT attains quadratic convergence, as established in Section \ref{sec:theory}, while each iteration can be implemented with cost scaling nearly linearly with $N$, as detailed in Section \ref{sec:implementation}.

\begin{algorithm}
	\caption{Superfast Iterative Hard Thresholding (SIHT) for solving \eqref{eq:P5}}
	\label{alg:SIHT}
	\begin{algorithmic}
		\REQUIRE Observations $\m{y}$, sensing matrix $\m{B}$, and sparsity level $r$ \\
		\STATE Initialize $\m{X}_0$
		\FOR{$t = 0,1,...$}
		\STATE \textbf{Step 1:} Obtain $\m{W}_{t+1}$ by solving \eqref{eq:subproblem} 
		\STATE \textbf{Step 2:} Set $\m{X}_{t+1} = \cT_r(\m{W}_{t+1})$
		\ENDFOR
		\ENSURE $\m{X}_{t+1}$
	\end{algorithmic}
\end{algorithm}

\begin{remark}
	The subproblem \eqref{eq:subproblem} provides insights for understanding the FIHT algorithm \cite{cai2019fast,wu2026lifted} and its difference from the proposed SIHT. Specifically, the Riemannian gradient of \eqref{eq:subproblem} is given by $\cP_{T_t} \sbra{ \cG\m{B}^H \sbra{\m{B}\cG^* \m{Z} -\m{y}} + \sbra{\cI-\cG\cG^*}\m{Z} }$. Starting from the current iterate $\m{X}_t \in T_t$, performing one step Riemannian gradient descent with unit stepsize yields
	\equs{
		\begin{split}
			\breve{\m{W}}_{t+1}
			& = \m{X}_t - \cP_{T_t} \sbra{ \cG\m{B}^H \sbra{\m{B}\cG^* \m{X}_t -\m{y}} + \sbra{\cI-\cG\cG^*}\m{X}_t } \\
			& = \m{X}_t - \cP_{T_t} \sbra{ \cG\m{B}^H \sbra{\m{B} \m{z}_t -\m{y}} + \m{X}_t - \cG \m{z}_t } \\
			& = \cP_{T_t} \cG \sbra{ \m{z}_t - \m{B}^H \sbra{\m{B}\m{z}_t - \m{y} } },
		\end{split}
	}
	where $\m{z}_t = \cG^* \m{X}_t$. This update coincides with the core step of FIHT.
	
	This observation shows that FIHT can be interpreted as performing one Riemannian gradient step for solving \eqref{eq:subproblem}, whereas SIHT solves \eqref{eq:subproblem} exactly. As a result, SIHT achieves quadratic convergence, while FIHT only has linear convergence.
\end{remark}

\subsection{Initialization} \label{subsec:init}
We present two initialization schemes for $\m{X}_0$ in SIHT.

The first scheme is one-step hard thresholding, defined by $\m{X}_0 = \cT_r\bigl( \cG \m{B}^H \m{y} \bigr)$.
This scheme is simple to implement, but its recovery guarantee generally requires a suboptimal sample complexity that scales quadratically with the rank $r$; see \eqref{eq:m_guarantee1}.

To obtain stronger recovery guarantees, we further consider a resampling-based initialization scheme and set $\m{X}_0=\tilde{\m{X}}_T$, where $\tilde{\m{X}}_T$ denotes the output of a resampled FIHT algorithm \cite{wu2026lifted}; see Appendix \ref{append:FIHT}. The key benefit of resampling is that each iterate $\tilde{\m{X}}_t$ is statistically independent of the fresh measurements used in the next iteration. This independence enables SIHT to achieve a near-optimal sample complexity, scaling linearly with $r$ up to log factors; see \eqref{eq:m_guarantee2}.

\section{Recovery Guarantee} \label{sec:theory}
We have the following theorem for SIHT.

\begin{theorem} \label{thm:recovery}
	Denote the condition number $\kappa := \sigma_{\max} \sbra{\cG\m{z}_\star} / \sigma_{\min} \sbra{\cG\m{z}_\star}$ and  $c_s := n_2 / n_1$. Assume $\m{B} \in \bC^{m\times N}$ has i.i.d.\ entries following $\cC\cN\sbra{0,1/m}$ and $\m{\xi} \in \bC^m$ has i.i.d.\ entries following $\cC\cN(0,\sigma^2/m)$. Let $\m{z}_t = \cG^* \m{X}_t$, where $\m{X}_t$ is the SIHT iterate. 
	Then,
	\equ{
		\twon{\m{z}_t - \m{z}_\star} 
		\le 2^{-2^t} \sigma_{\min} \sbra{\cG\m{z}_\star} 
		+ \sigma \sqrt{ \frac{ Cr(c_s + \log N)\log N} {m} }, 
		\quad \forall t \ge 0,
		\label{eq:main_recovery}
	}
	for some universal constant $C>0$,
	under either of the following initializations.
	
	(i) One-step initialization:
	Let $
	\m{X}_0
	=
	\cT_r\sbra{\cG\m{B}^H\m{y}}$,
	then with probability at least
	$1-2N^{-(c_s+\log N)}$, \eqref{eq:main_recovery} holds provided
	\equ{
		m \gtrsim \kappa^2 r \sbra{ r + \frac{\sigma^2}{\sigma_{\max}^2\sbra{\cG\m{z}_\star}} } 
		\max\lbra{ c_s, \log N } \log N.
		\label{eq:m_guarantee1}
	}
	
	(ii) Resampling initialization:
	Let
	$\m{X}_0 = \tilde{\m{X}}_T$
	with
	$T\ge\lceil \log_2 r \rceil$ where $\tilde{\m{X}}_T$ is the output of Algorithm \ref{alg:Init-resamp}, then with probability exceeding
	$1-(4+2T)N^{-(c_s+\log N)}$, \eqref{eq:main_recovery} holds provided
	\equ{
		m \gtrsim \kappa^2 \log \sbra{r} \sbra{ r + \frac{\sigma^2}{\sigma_{\max}^2\sbra{\cG\m{z}_\star}} } 
		\max\lbra{ c_s, \log N } \log N.
		\label{eq:m_guarantee2}
	}
\end{theorem}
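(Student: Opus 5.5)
The argument has three parts: a deterministic one-step contraction inequality for SIHT under LRIP, a bound on the noise term, and an initialization guarantee. Write $\m{Z}_\star := \cG\m{z}_\star$, which has rank $r$. Since $\cG$ is an isometry, $\twon{\m{z}_t-\m{z}_\star} = \twon{\cG^*(\m{X}_t-\m{Z}_\star)} \le \frobn{\m{X}_t-\m{Z}_\star}$, so it suffices to control $\frobn{\m{X}_t-\m{Z}_\star}$.

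\textbf{Step 1: one-step inequality.} Let $\cP_t:=\cP_{T_t}$ and let $\cM := \cG\m{B}^H\m{B}\cG^* + (\cI-\cG\cG^*)$ be the Hessian-type operator. Writing $\m{y}=\m{B}\cG^*\m{Z}_\star+\m{\xi}$, the optimality condition of \eqref{eq:subproblem} is
\[
\cP_t\cM\cP_t(\m{W}_{t+1}-\m{Z}_\star) = -\cP_t\cM(\cI-\cP_t)\m{Z}_\star + \cP_t\cG\m{B}^H\m{\xi}.
\]
Matrices in $T_t$ have rank at most $2r$. Combining LRIP (Definition \ref{def:LRIP}, via Theorem \ref{thm:LRIP}) with the identity $\frobn{\m{Z}}^2 = \twon{\cG^*\m{Z}}^2+\frobn{(\cI-\cG\cG^*)\m{Z}}^2$ gives $\|\cP_t\cM\cP_t - \cP_t\|\le\delta$ on $T_t$. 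Hence $\cP_t\cM\cP_t$ is invertible on $T_t$ with inverse norm at most $1/(1-\delta)$.

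The key observation is the standard manifold estimate $\frobn{(\cI-\cP_t)\m{Z}_\star}\le \frobn{\m{X}_t-\m{Z}_\star}^2/\sigma_{\min}(\m{Z}_\star)$. This term is the source of quadratic convergence. The cross term $\cP_t\cM(\cI-\cP_t)\m{Z}_\star$ involves a rank-$\le r$ matrix, and by polarization of LRIP at rank $3r$ it is bounded by $(1+\delta)\frobn{(\cI-\cP_t)\m{Z}_\star}$. For the tangent error, $\m{W}_{t+1}-\m{Z}_\star = \cP_t(\m{W}_{t+1}-\m{Z}_\star) - (\cI-\cP_t)\m{Z}_\star$. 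The hard-thresholding step $\cT_r$ at most doubles the error to a rank-$r$ target. Together these give
\[
\frobn{\m{X}_{t+1}-\m{Z}_\star} \le \frac{C_1}{\sigma_{\min}(\m{Z}_\star)}\frobn{\m{X}_t-\m{Z}_\star}^2 + C_2\sup_{T}\frobn{\cP_T\cG\m{B}^H\m{\xi}}.
\]

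\textbf{Step 2: noise term.} The quantity $\sup_T\frobn{\cP_T\cG\m{B}^H\m{\xi}}$ is bounded by $\sqrt{2r}\,\twon{\cG\m{B}^H\m{\xi}}$. Conditional on $\m{\xi}$, the vector $\m{B}^H\m{\xi}$ is Gaussian with variance $\twon{\m{\xi}}^2/m\approx\sigma^2/m$ per entry. A spectral-norm bound for random Hankel matrices (matrix Bernstein / the same chaining used in Theorem \ref{thm:LRIP}) gives $\twon{\cG\m{g}}\lesssim\sqrt{(c_s+\log N)\log N}$ with probability $1-N^{-(c_s+\log N)}$. This yields the noise level $\epsilon:=\sigma\sqrt{Cr(c_s+\log N)\log N/m}$.

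\textbf{Step 3: induction.} Set $e_t=\frobn{\m{X}_t-\m{Z}_\star}$. If $e_0\le \sigma_{\min}/(2C_1)$, the recursion $e_{t+1}\le (C_1/\sigma_{\min})e_t^2+C_2\epsilon$ gives by induction $e_t\le 2^{-2^t}\sigma_{\min}+C'\epsilon$. This uses the fact that $\epsilon\lesssim\sigma_{\min}$, which is guaranteed by the $\sigma^2/\sigma_{\max}^2$ term in the sample sizes, and it requires rescaling $\sigma_{\min}$ by constants, which can be absorbed into $m$.

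\textbf{Step 4: initialization.} This is the main obstacle: showing $e_0\lesssim\sigma_{\min}(\m{Z}_\star)$ at the stated sample sizes.

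For (i), write $\cG\m{B}^H\m{y}-\m{Z}_\star = \cG(\m{B}^H\m{B}-\cI)\cG^*\m{Z}_\star + \cG\m{B}^H\m{\xi}$. The spectral norm of the first term is bounded using LRIP duality by $\delta\frobn{\m{Z}_\star}\le\delta\sqrt r\,\sigma_{\max}$, and the second term is bounded as in Step 2. Then $e_0\le 2\sqrt{2r}\,\twon{\cdot}$. Requiring this to be at most $c\,\sigma_{\min}$ forces $\delta\lesssim 1/(\kappa r)$, and Theorem \ref{thm:LRIP} then gives \eqref{eq:m_guarantee1}.

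For (ii), I would invoke the resampled FIHT guarantee of \cite{wu2026lifted} (Appendix \ref{append:FIHT}). Each stage uses fresh independent measurements, so only a constant-level (in $r$) LRIP-type concentration on a fixed low-rank direction is needed per stage, with error contracting by a factor of $1/2$. After $T\ge\lceil\log_2 r\rceil$ stages, the initial $\sqrt r$ loss is removed, and each of the roughly $\log r$ batches of size about $\kappa^2 r(\cdots)$ gives \eqref{eq:m_guarantee2}. A union bound over the $T$ batches plus the SIHT batch yields the stated $(4+2T)N^{-(c_s+\log N)}$ probability.
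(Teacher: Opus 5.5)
Your Steps 1--3 and part (ii) follow the paper's route (Theorem \ref{thm:local}, Lemma \ref{lem:noise}, Lemma \ref{lem:init2}). Bounding the cross term by $(1+\delta)$ instead of $\delta_{3r}$ only changes constants. You also correctly flag that the induction needs the noise level to be $\lesssim\sigma_{\min}(\cG\m{z}_\star)$, which the $\sigma^2/\sigma_{\max}^2$ term in the sample size supplies.

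\textbf{The gap is in part (i).} You bound $\twon{\cG(\m{B}^H\m{B}-\m{I})\cG^*\m{Z}_\star}\le\delta\frobn{\m{Z}_\star}\le\delta\sqrt{r}\,\sigma_{\max}$, then pass to Frobenius norm with the factor $2\sqrt{2r}$. This gives $e_0\lesssim\delta\, r\,\sigma_{\max}$, so you need $\delta\lesssim 1/(\kappa r)$. Theorem \ref{thm:LRIP} requires $m\gtrsim\delta^{-2}r(c_s+\log N)\log N$, so your route yields $m\gtrsim\kappa^2 r^3(c_s+\log N)\log N$. That is a factor of $r$ worse than \eqref{eq:m_guarantee1}, so the claim that it ``gives \eqref{eq:m_guarantee1}'' does not follow. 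The $\sqrt{r}$ is paid twice: LRIP controls only Frobenius norms, so your spectral bound already carries $\frobn{\m{Z}_\star}$, and the rank-$2r$ spectral-to-Frobenius conversion charges another $\sqrt{2r}$.

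\textbf{The missing idea is the paper's Lemma \ref{lem:init1}.} Let $\m{Q}$ have orthonormal columns spanning the column spaces of both $\m{X}_0$ and $\m{Z}_\star$, so it has at most $2r$ columns. Then $\m{X}_0$ is also a best rank-$r$ approximation of $\m{Q}\m{Q}^H\cG\m{B}^H\m{y}$, which gives $\frobn{\m{X}_0-\m{Z}_\star}\le 2\frobn{\m{Q}\m{Q}^H\cG\m{B}^H\m{y}-\m{Z}_\star}$. The signal part $\m{Q}\m{Q}^H\cG(\m{B}^H\m{B}-\m{I})\cG^*\m{Z}_\star$ can now be bounded directly in Frobenius norm by $\delta_{2r}\frobn{\m{Z}_\star}$. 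This follows by polarization, since every $\m{Q}\m{C}\pm\m{Z}_\star$ has rank at most $2r$. Only the noise part pays the $\sqrt{2r}$ factor. The result is $e_0\le 2\delta_{2r}\sqrt{r}\,\sigma_{\max}+2\sqrt{2r}\,\twon{\cG\m{B}^H\m{\xi}}$. Hence $\delta_{2r}\lesssim 1/(\kappa\sqrt{r})$ suffices, and Theorem \ref{thm:LRIP} then gives the stated $\kappa^2 r^2$ dependence.

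Alternatively, you could keep your spectral-norm route but bound $\twon{\cG(\m{B}^H\m{B}-\m{I})\m{z}_\star}$ for the fixed vector $\m{z}_\star$ by direct Gaussian concentration rather than through $\delta$. This is as in the first step of Algorithm \ref{alg:Init-resamp}, and it avoids the extra $\sqrt{r}$ coming from LRIP.
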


\begin{proof}
	The proof is deferred to Section \ref{sec:recovery}.
\end{proof}

Theorem \ref{thm:recovery} reveals several notable properties of SIHT.

\paragraph{Quadratic convergence}
In the noiseless setting where $\m{\xi}=\m{0}$, SIHT achieves an $\varepsilon$-accurate solution satisfying $\twon{\m{z}_t-\m{z}_\star} \le \varepsilon \sigma_{\min}\sbra{\cG\m{z}_\star}$ within
$\cO\bigl(\log\log(1/\varepsilon)\bigr)$
iterations. Hence, SIHT enjoys quadratic convergence. This significantly improves over FIHT \cite{cai2019fast,wu2026lifted}, which requires
$\cO\bigl(\log(1/\varepsilon)\bigr)$
iterations and only achieves linear convergence.

\paragraph{Noise robustness}
SIHT achieves stable recovery under Gaussian noise.
In \eqref{eq:main_recovery}, the first term decays doubly exponentially with the iterations.
Thus, SIHT rapidly achieves a recovery error bounded by a constant multiple of $\sigma$.

\paragraph{Near-optimal sample complexity}
In the low-noise regime $\sigma \ll \sigma_{\max}(\mathcal G z_\star)$, the sample complexity in \eqref{eq:m_guarantee2} reduces to $m \gtrsim \kappa^2 r \log(r) \max\lbra{c_s,\log N}\log N$, matching that of FIHT \cite{wu2026lifted}. Since the DoF of the ground-truth signal $\m{z}_\star$ are only $4r$, the sample complexity is near-optimal up to conditioning and log factors.

\section{Efficient Implementation} \label{sec:implementation}

\subsection{Step 1: Parameterized Conjugate Gradient (CG) Algorithm for solving (\ref{eq:subproblem})} \label{subsubsec:CG}
\paragraph{Preliminaries}
Suppose the rank-$r$ matrix $\m{X}_t$ admits the SVD $\m{X}_t = \m{U}_t \m{\Sigma}_t \m{V}_t^H$, where $\m{U}_t \in \bC^{n_1 \times r}$, $\m{\Sigma}_t = \diag\sbra{\m{\sigma}_t}$, and $\m{V}_t \in \bC^{n_2 \times r}$. Then, the tangent space at $\m{X}_t$ is given by
\equ{
	T_t := \lbra{ \m{U}_t \m{C}^H_2 + \m{C}_1 \m{V}_t^H : 
		\m{C}_1 \in \bC^{n_1 \times r},\; \m{C}_2 \in \bC^{n_2 \times r} },
	\label{eq:Tangentspace}
}
and the orthogonal projection of $\m{Z}\in \bC^{n_1 \times n_2}$ onto $T_t$ is given by
\equs{
	\cP_{T_t}(\m{Z})
	= \m{U}_t\m{U}_t^H \m{Z} + \m{Z}\m{V}_t\m{V}_t^H - \m{U}_t\m{U}_t^H \m{Z}\m{V}_t\m{V}_t^H.
}
For notational simplicity, we omit the subscript $t$ in the following.

\paragraph{Formulation}
From the Riemannian first-order optimality condition of problem \eqref{eq:subproblem}, we obtain
\equ{
	\begin{split}
		\cP_T\sbra{\cG\sbra{\m{B}^H\m{B} - \m{I}}\cG^* + \cI } \sbra{ \m{Z} } 
		= \cP_T \cG \m{B}^H \m{y}, \qquad\m{Z} \in T. 
	\end{split} \label{eq:linearEq}
}
Denote 
\equs{
	\cM:=\cP_T\sbra{\cG\sbra{\m{B}^H\m{B} - \m{I}}\cG^* + \cI }\cP_T, \qquad \m{D}:=\cP_T \cG \m{B}^H \m{y}.
}
Then solving \eqref{eq:subproblem} is equivalent to solving the linear system 
\equ{
	\cM\sbra{\m{Z}}=\m{D}, \qquad \m{Z} \in T. \label{eq:AXD}
}

We assume that the self-adjoint operator $\cM$ is invertible on the tangent space $T$, which will be established in Lemma \ref{lem:PTtPTt}.
Then the solution can be formally written as $\m{Z}=\cM^{-1} \m{D}$. But explicitly forming or inverting $\cM$ is computationally expensive due to the high dimension of $T$. Instead, we are to exploit the structure of $\cM$ and solve \eqref{eq:AXD} efficiently using the CG algorithm.

We point out that, a direct application of CG requires storing three full matrices of size $n_1 \times n_2$, namely the iterate $\m{Z}$, the residual $\m{R}$, and the search direction $\m{P}$, which is not scalable to large-scale problems. In addition, Step 2 requires performing a truncated SVD on the output $\m{W}_{t+1}$ from Step 1, which suffers from high computational cost.

\paragraph{Parameterization}
To address aforementioned issues, we exploit the low-rank and Hankel structure of the iterate via parameterization. Specifically, we parameterize $\m{Z} \in T$ as
\equs{
	\m{Z} = f(\m{q}^Z, \m{\sigma}^Z) := \cP_T \cG \m{q}^Z + \m{U} \m{\Sigma}^Z \m{V}^H,
}
where $\m{q}^Z \in \bC^N$ and $\m{\Sigma}^Z = \diag(\m{\sigma}^Z) \in \bR^{r \times r}$ with $\m{\sigma}^Z \in \bR^r$. 
Similarly, the residual $\m{R}$ and search direction $\m{P}$ admit the parameterizations
\equs{
	\begin{split}
		\m{R} & = f(\m{q}^R, \m{\sigma}^R) := \cP_T \cG \m{q}^R + \m{U} \m{\Sigma}^R \m{V}^H, \\ 	
		\m{P} & = f(\m{q}^P, \m{\sigma}^P) := \cP_T \cG \m{q}^P + \m{U} \m{\Sigma}^P \m{V}^H,
	\end{split}
}
where $\m{q}^R, \m{q}^P \in \bC^N$, $\m{\Sigma}^R = \diag(\m{\sigma}^R) $ and $\m{\Sigma}^P = \diag(\m{\sigma}^P) $ with $\m{\sigma}^R,\m{\sigma}^P \in \bR^r$.

Under this parameterization, each variable (e.g., $\m{Z}$, $\m{R}$, and $\m{P}$) is represented by a pair $(\m{q}, \m{\sigma}) \in \bC^N \oplus\bR^r$, which has $2N + r$ real DoF. Consequently, the CG iterations can be carried out entirely in this reduced parameter space, avoiding the explicit storage of full matrices in $\bC^{n_1 \times n_2}$ (note that $r \ll N = n_1 + n_2 - 1$). Moreover, it eliminates the need for truncated SVD of the full matrix $\m{W}_{t+1}$, as detailed in Section \ref{subsubsec:step2}.

\begin{algorithm}
	\caption{Parameterized CG for solving problem \eqref{eq:AXD}}
	\label{alg:CG}
	\begin{algorithmic}
		\REQUIRE $\m{y}$, $\m{B}$, $(\m{U}, \m{\Sigma}, \m{V})$ of $\m{X}_t$ 
		\STATE Initialize $\lbra{\m{q}_0^Z, \m{\sigma}_0^Z}$, $\lbra{\m{q}_0^R, \m{\sigma}_0^R}$, and $\lbra{\m{q}_0^P, \m{\sigma}_0^P}$ via \eqref{eq:init_XRP}
		
		\FOR{$k = 0,1,\dots$}
		
		\STATE Compute $\alpha_k = \left \langle \m{R}_k, \m{R}_k \right \rangle / \left \langle \m{P}_k, \cM\sbra{\m{P}_k} \right \rangle$ via \eqref{eq:RR} and \eqref{eq:PAP}
		
		\STATE Update $\m{q}^Z_{k+1} = \m{q}^Z_k + \alpha_k \m{q}^P_k$ and $\m{\sigma}_{k+1}^Z = \m{\sigma}_k^Z + \alpha_k \m{\sigma}_k^P$
		
		\STATE Update $\m{q}^R_{k+1} = \m{q}_k^R - \alpha_k \m{q}^\circ_k$ and $\m{\sigma}_{k+1}^R = \m{\sigma}_k^R - \alpha_k \m{\sigma}^P_k$
		
		\IF{$\frobn{\m{R}_{k+1}} < \epsilon$}
		\STATE break
		\ENDIF
		
		\STATE Update $\beta_k = \left \langle \m{R}_{k+1}, \m{R}_{k+1} \right \rangle / \left \langle \m{R}_k, \m{R}_k \right \rangle$
		
		\STATE Update $\m{q}^P_{k+1} = \m{q}^R_{k+1} + \beta_k \m{q}^P_k$ and $\m{\sigma}^P_{k+1} = \m{\sigma}_{k+1}^R + \beta_k \m{\sigma}_k^P$
		\ENDFOR
		
		\ENSURE $\lbra{\m{q}^Z_{k+1}, \m{\sigma}_{k+1}^Z}$
	\end{algorithmic}
\end{algorithm}

\paragraph{Parameterized CG}
Starting from the standard CG updates, we derive the corresponding expressions under our parameterization as follows, and summarize the resulting algorithm in Algorithm \ref{alg:CG}.
	\subparagraph{Step 1.1 Initialization} Note that $\cP_T \sbra{\m{X}_t} = \m{X}_t$. We have
	\equ{
		\begin{split}
			\m{Z}_0 = \m{X}_t = \m{U} \m{\Sigma} \m{V}^H  \quad & \Leftrightarrow \quad \m{q}^Z_0 = \m{0}, \ \m{\sigma}^Z_0 = \m{\sigma}, \\
			\m{R}_0 = \m{D} - \cM\sbra{\m{Z}_0} \quad & \Leftrightarrow \quad \m{q}^R_0 = \m{B}^H \sbra{\m{y} - \m{B}\cG^*\sbra{\m{U}\m{\Sigma}\m{V}^H} } + \cG^*\sbra{\m{U}\m{\Sigma}\m{V}^H}, \\ 
			& \quad \quad \ \; \m{\sigma}^R_0 = - \m{\sigma}, \\
			\m{P}_0 = \m{R}_0 \quad & \Leftrightarrow \quad \m{q}^P_0 = \m{q}^R_0, \ \m{\sigma}^P_0 = \m{\sigma}^R_0.
		\end{split} \label{eq:init_XRP}
	}
	\subparagraph{Step 1.2 Compute stepsize} $\alpha_k = \left \langle \m{R}_k, \m{R}_k \right \rangle / \left \langle \m{P}_k, \cM\sbra{\m{P}_k} \right \rangle$. Specifically, the numerator admits
	\equ{
		\begin{split}
			\left \langle \m{R}_k, \m{R}_k \right \rangle 
			& = \left \langle \cP_T \cG \m{q}^R_k + \m{U} \m{\Sigma}^R_k \m{V}^H, \cP_T \cG \m{q}^R_k + \m{U} \m{\Sigma}^R_k \m{V}^H \right \rangle  \\
			& = \left \langle \m{q}^R_k, \cG^* \cP_T \cG \m{q}_k^R \right \rangle + 2 \left \langle \m{q}_k^R, \cG^* \sbra{ \m{U}\m{\Sigma}^R_k \m{V}^H } \right \rangle_{\bR} + \frobn{ \m{\Sigma}_k^R }^2.
		\end{split} \label{eq:RR}
	}
	Since $\cM\sbra{\m{P}_k}$ can be parameterized as $\cM\sbra{\m{P}_k} = \cP_T \cG \m{q}^\circ_k  + \m{U} \m{\Sigma}^P_k \m{V}^H$,
	where $$
		\m{q}^\circ_k := \sbra{\m{B}^H\m{B} - \m{I}}\cG^* \sbra{ \cP_T \cG \m{q}^P_k + \m{U} \m{\Sigma}^P_k \m{V}^H } + \m{q}^P_k,$$
	the denominator admits
	\equ{
		\begin{split}
			\left \langle \m{P}_k, \cM\sbra{\m{P}_k} \right \rangle 
			& = \left \langle \cP_T \cG \m{q}^P_k + \m{U} \m{\Sigma}^P_k \m{V}^H, \cP_T \cG \m{q}^\circ_k  + \m{U} \m{\Sigma}^P_k \m{V}^H \right \rangle  \\
			& = \left \langle \cG^* \cP_T \cG \m{q}^P_k, \m{q}_k^\circ \right \rangle 
			+ \left \langle \m{q}_k^P, \cG^* \sbra{ \m{U}\m{\Sigma}^P_k \m{V}^H } \right \rangle \\
			& \quad + \left \langle \cG^* \sbra{ \m{U}\m{\Sigma}^P_k \m{V}^H }, \m{q}_k^\circ \right \rangle
			+ \frobn{\m{\Sigma}_k^P}^2.
		\end{split} \label{eq:PAP}
	}
	
	\subparagraph{Step 1.3 Update iterate}
	\equs{
		\m{Z}_{k+1} = \m{Z}_k + \alpha_k \m{P}_k 
		\quad \Leftrightarrow \quad
		\m{q}^Z_{k+1} = \m{q}^Z_k + \alpha_k \m{q}^P_k, \quad \m{\sigma}_{k+1}^Z = \m{\sigma}_k^Z + \alpha_k \m{\sigma}_k^P. 
	}
	
	\subparagraph{Step 1.4 Update residual}
	\equs{
		\m{R}_{k+1} = \m{R}_k - \alpha_k \cM \sbra{\m{P}_k} 
		\quad \Leftrightarrow \quad 
		\m{q}^R_{k+1} = \m{q}_k^R - \alpha_k \m{q}^\circ, \quad \m{\sigma}_{k+1}^R = \m{\sigma}_k^R - \alpha_k \m{\sigma}^P_k.
	}
	\subparagraph{Step 1.5 Check convergence} Compute $\frobn{\m{R}_{k+1}}^2 = \left \langle \m{R}_{k+1}, \m{R}_{k+1} \right \rangle$ as in \eqref{eq:RR}.
	\subparagraph{Step 1.6 Compute conjugate coefficient} Compute $\beta_k = \left \langle \m{R}_{k+1}, \m{R}_{k+1} \right \rangle / \left \langle \m{R}_k, \m{R}_k \right \rangle$.
	\subparagraph{Step 1.7 Update search direction} 
	\equs{
		\m{P}_{k+1} = \m{R}_{k+1} + \beta_k \m{P}_k 
		\quad \Leftrightarrow \quad 
		\m{q}^P_{k+1} = \m{q}^R_{k+1} + \beta_k \m{q}^P_k, \quad \m{\sigma}^P_{k+1} = \m{\sigma}_{k+1}^R + \beta_k \m{\sigma}_k^P.
	}

The parameterized CG updates involve two types of operations:
\begin{itemize}
	\item the adjoint Hankel operator applied to the product of rank-$r$ factors, i.e., $\cG^*(\m{U}\m{\Sigma}\m{V}^H)$;
	\item multiplications between Hankel matrix $\cG\m{q}$ and rank-$r$ factors in $\cP_T\sbra{\cdot}$, i.e., $\m{U}^H(\cG \m{q})$ and $(\cG \m{q})\m{V}$.
\end{itemize}
As shown in \cite[Section 2.2]{cai2019fast}, these operations can be efficiently implemented via the fast Fourier transform (FFT), each requiring $\cO(r N \log N)$ flops. Additionally, computing $\m{B}\m{x}$ for a vector $\m{x}$ incurs $\cO(mN)$ flops. Therefore, the overall computational cost per CG iteration is $\cO(mN + r N \log N)$. 

As will be shown in Lemma \ref{lem:PTtPTt}, $\cM$ is positive definite on the tangent space $T$ with a bounded condition number. It then follows from \cite[Section 5.1.5]{conn2000trust} that, under exact arithmetic, CG solves the linear system \eqref{eq:AXD} exactly in at most $r(n_1+n_2-r)$ iterations, the complex dimension of $T$. Its bounded condition number ensures a dimension-independent linear convergence rate.

\subsection{Step 2: Efficient SVD via Low-Rank Structure} \label{subsubsec:step2}
Note that $\m{\Sigma}_{k+1}^Z = \diag\sbra{\m{\sigma}_{k+1}^Z}$. Define
\equs{
	\begin{split}
		\m{C}&:=\m{U}^H \cG \m{q}^Z_{k+1} \m{V} \in \bC^{r \times r}, \\
		\m{E}&:= (\m{I} - \m{V} \m{V}^H)(\cG \m{q}^Z_{k+1})^H \m{U} \in \bC^{n_2 \times r}, \\ 
		\m{F}&:=(\m{I} - \m{U} \m{U}^H ) \cG \m{q}^Z_{k+1} \m{V} \in \bC^{n_1 \times r}.
	\end{split}
}
Let $\m{E} = \m{Q}_1\m{R}_1$ and $\m{F} = \m{Q}_2 \m{R}_2$ be the QR factorizations, where $\m{Q}_1 \in \bC^{n_2\times r}$ and $\m{Q}_2 \in \bC^{n_1\times r}$ satisfying $\m{Q}_1^H \m{Q}_1 = \m{I}$ and $\m{Q}_2^H \m{Q}_2 = \m{I}$, and $\m{R}_1,\m{R}_2 \in \bC^{r \times r}$.

For Step 2 in SIHT, instead of explicitly forming $\m{W}_{t+1}$, we work with its parameterized form:
\equ{
	\begin{split}
		& \cP_T \cG \m{q}^Z_{k+1} + \m{U} \m{\Sigma}_{k+1}^Z \m{V}^H \\
		& = \m{U} \m{U}^H \cG \m{q}^Z_{k+1} + \cG \m{q}^Z_{k+1} \m{V} \m{V}^H - \m{U} \m{U}^H \cG \m{q}^Z_{k+1} \m{V} \m{V}^H + \m{U} \m{\Sigma}_{k+1}^Z \m{V}^H \\
		& = \m{U}  \m{U}^H \cG \m{q}^Z_{k+1} \sbra{ \m{I} - \m{V} \m{V}^H}  + \sbra{ \m{I} - \m{U} \m{U}^H } \cG \m{q}^Z_{k+1} \m{V}  \m{V}^H \\
		& \quad + \m{U}  \m{U}^H \cG \m{q}^Z_{k+1} \m{V}  \m{V}^H + \m{U} \m{\Sigma}_{k+1}^Z \m{V}^H  \\
		& = \m{U} \m{E}^H + \m{F} \m{V}^H + \m{U} \sbra{\m{C}+\m{\Sigma}_{k+1}^Z} \m{V}^H  \\
		& = \begin{bmatrix} \m{U} & \m{Q}_2 \end{bmatrix} \begin{bmatrix} \m{C}+\m{\Sigma}_{k+1}^Z & \m{R}_1^H \\ \m{R}_2 & \m{0} \end{bmatrix} \begin{bmatrix} \m{V} & \m{Q}_1 \end{bmatrix}^H.
	\end{split} \label{eq:Wt+1P}
}
Let the SVD of the middle $2r \times 2r$ matrix in the last line of \eqref{eq:Wt+1P} be $ \m{U}_c \m{\Sigma}_c \m{V}_c^H$.
Then 
$
	\sbra{ \begin{bmatrix} \m{U} & \m{Q}_2 \end{bmatrix} \m{U}_c } \m{\Sigma}_c \sbra{ \begin{bmatrix} \m{V} & \m{Q}_1 \end{bmatrix} \m{V}_c }^H
$
is a SVD of $\m{W}_{t+1}$, since $\m{V}^H \m{Q}_1 = \m{0}$ and $\m{U}^H \m{Q}_2 = \m{0}$. Therefore, $\m{X}_{t+1} = \cT_r(\m{W}_{t+1})$ can be obtained by retaining the leading $r$ singular components.

The computational cost is dominated by forming $\m{C}$, $\m{E}$ and $\m{F}$, which requires $\cO(r^2 N + r N \log N)$ flops, and performing QR factorizations of $\m{E}$ and $\m{F}$, which requires $\cO(r^2 N)$ flops, and computing the SVD of the $2r \times 2r$ matrix, which costs $\cO(r^3)$ flops.

\subsection{Complexity of SIHT}
In summary, with a fixed maximum number of CG iterations, the computational cost per outer iteration of SIHT is $\mathcal{O}(mN+rN\log N+r^2N+r^3)$. The memory usage is dominated by storing the sensing matrix $\m{B} \in \bC^{m \times N}$ and the matrices $\m{U},\m{Q}_2 \in \bC^{n_1 \times r}$ and $\m{V}, \m{Q}_1 \in \bC^{n_2 \times r}$, leading to an overall memory complexity of $\cO\sbra{mN + rN}$.

\section{Numerical Experiments} \label{sec:exp}
We evaluate the empirical performance of SIHT and compare it with FIHT \cite{wu2026lifted}, Hankel-NNM \cite{cai2016robust}, and ANM \cite{heckel2017generalized}. For SIHT, the outer iterations are terminated when the relative change $\norm{\m{z}_{t+1} - \m{z}_t }_2 / \norm{\m{z}_t}_2 \le 10^{-8}$ where $\m{z}_t = \cG^* \m{X}_t$ or when the maximum number of iterations $T_{max}$ is reached. In the inner loop, corresponding to the parameterized CG iterations, we set the maximum number of iterations to $K_{max}$ and the residual tolerance to $\epsilon = 10^{-8}$. SIHT is initialized using one-step hard thresholding rather than the resampling scheme, as preliminary simulations indicate comparable performance between the two schemes while the former is computationally simpler. FIHT uses the same initialization and stopping criteria as SIHT. For Hankel-NNM and ANM, we employ the interior-point solver SDPT3 via the CVX toolbox \cite{cvx}. Throughout this section, the signal recovery accuracy is measured by the relative root-mean-squared error (RMSE) $\twon{\widehat{\m{x}} - \m{x}_\star} / \twon{\m{x}_\star}$, where $\widehat{\m{x}}$ denotes the recovered signal. 

\begin{figure*}[htb] 
	\centering
	\subfigure {\includegraphics[width=6.0cm]{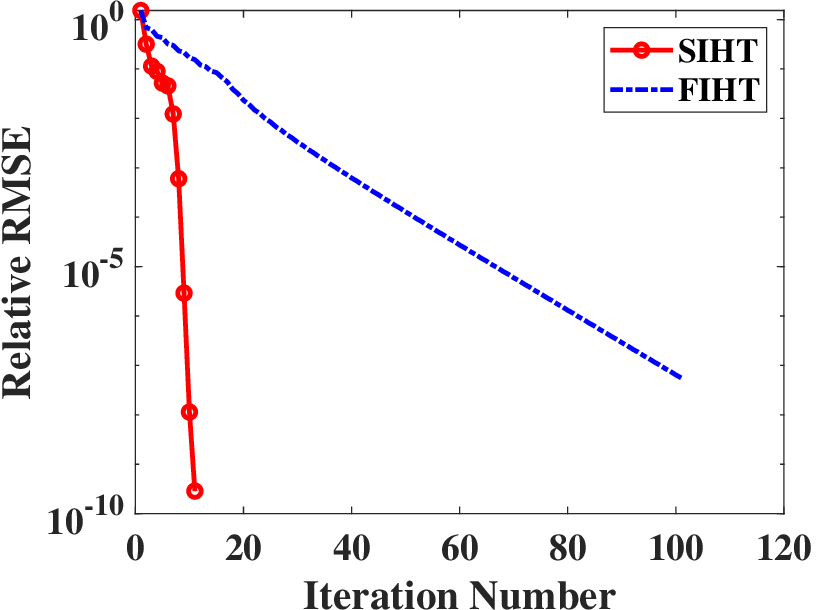} }
	\subfigure {\includegraphics[width=6.0cm]{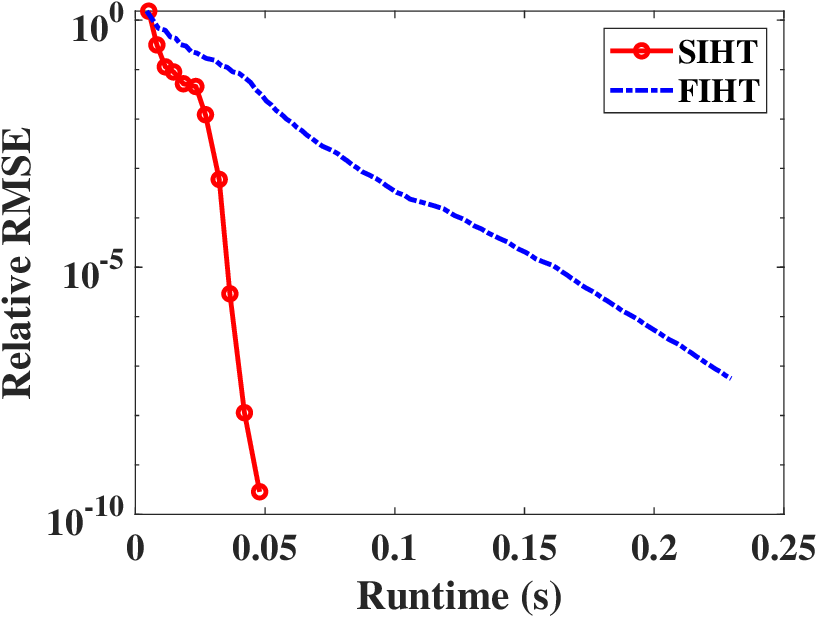} }
	
	\subfigure {\includegraphics[width=6.0cm]{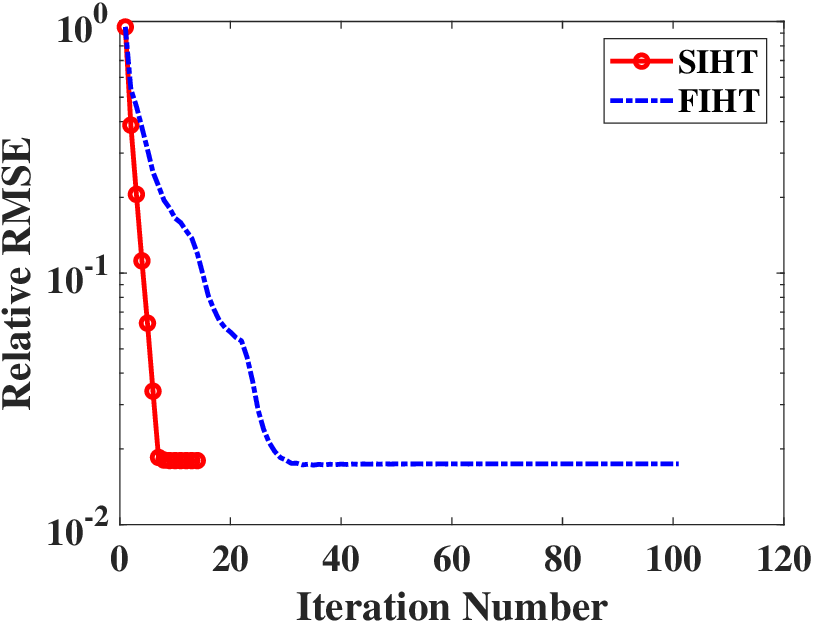} }
	\subfigure {\includegraphics[width=6.0cm]{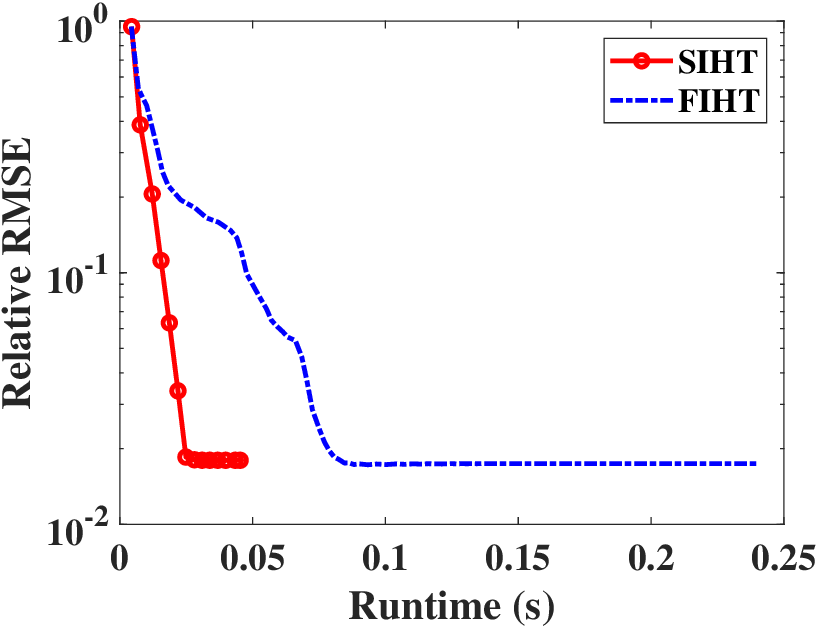} }
	\caption{Relative RMSE versus iteration for the proposed SIHT and FIHT \cite{wu2026lifted}. \emph{Top}: noiseless case. \emph{Bottom}: noisy case with $\mathrm{SNR}=20\,\mathrm{dB}$. Here, $N=127$, $m=90$, and $r=15$.}
	\label{fig:time}
\end{figure*}

\subsection{Convergence Performance}
We generate spectrally sparse signals $\m{x}_\star$ of length $N=127$ with $r=15$ frequency components. Each frequency $f_k$ is sampled uniformly from $[0,1)$ and the damping factors are set to $\tau_k=0$. The amplitudes are generated as $s_k = \sbra{1+10^{0.5 m_k}}e^{i 2\pi \phi_k}$, where $m_k$ and $\phi_k$ are independently drawn from the uniform distribution on $[0,1]$. The Hankel matrix dimension is chosen as $n_1=n_2=64$.
We acquire $m=90$ measurements of $\m{x}_\star$ through random Gaussian sampling, i.e., $\m{y}=\m{B}\cD\m{x}_\star$, where the entries of $\m{B}$ follow $\cC\cN(0,1/m)$. In the noisy case, complex white Gaussian noise $\m{\xi}$ is added to the measurements $\m{y}$. The signal-to-noise ratio (SNR) is defined as $\text{SNR} =10\log_{10} (\twon{ \m{y} }^2/\twon{\m{\xi}}^2)$. For SIHT, we set $T_{max}=100$ and $K_{max} = 10$. For FIHT, we set $T_{max}=100$.

Figure \ref{fig:time} presents the relative RMSE versus iteration number and runtime. SIHT exhibits a quadratic convergence to machine precision in the noiseless case and to an error floor determined by the noise level in the noisy case, while FIHT only exhibits a linear convergence. This is consistent with the predicted convergence behavior in Section \ref{sec:theory}. Moreover, SIHT requires substantially less runtime than FIHT to achieve the same RMSE, thanks to the efficient implementation in Section \ref{sec:implementation}.

\subsection{Phase Transition}
A signal is regarded as successfully recovered if the relative RMSE is below $10^{-3}$. For each parameter pair $(m,r)$, the success rate is estimated over $20$ independent Monte Carlo trials, where $m\in\lbra{20,24,\ldots,124}$ and $r\in\lbra{1,3,\ldots,39}$. For SIHT, we set $T_{max}=200$ and $K_{max} = 50$. For FIHT, we set $T_{max}=1000$.
Figure \ref{fig:exp1} illustrates the resulting phase transition results in the noiseless case. SIHT achieves slightly better recovery performance than FIHT and Hankel-NNM, while substantially outperforming ANM, which performs worse due to closely spaced frequencies.

\begin{figure*}[t] 
	\begin{center}
		\subfigure[Proposed SIHT]%
		{\includegraphics[width=.40\textwidth]{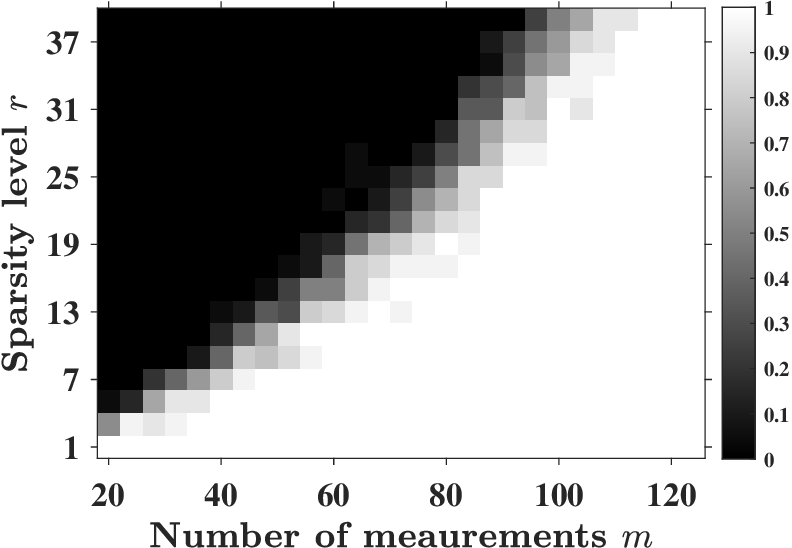}}
		\qquad
		\subfigure[FIHT \cite{wu2026lifted}]%
		{\includegraphics[width=.40\textwidth]{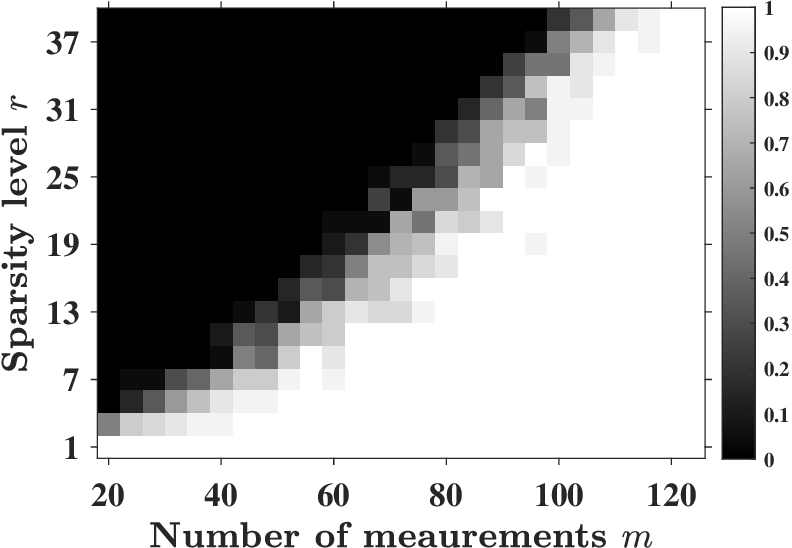}}
		\subfigure[Hankel-NNM \cite{cai2016robust}]%
		{\includegraphics[width=.40\textwidth]{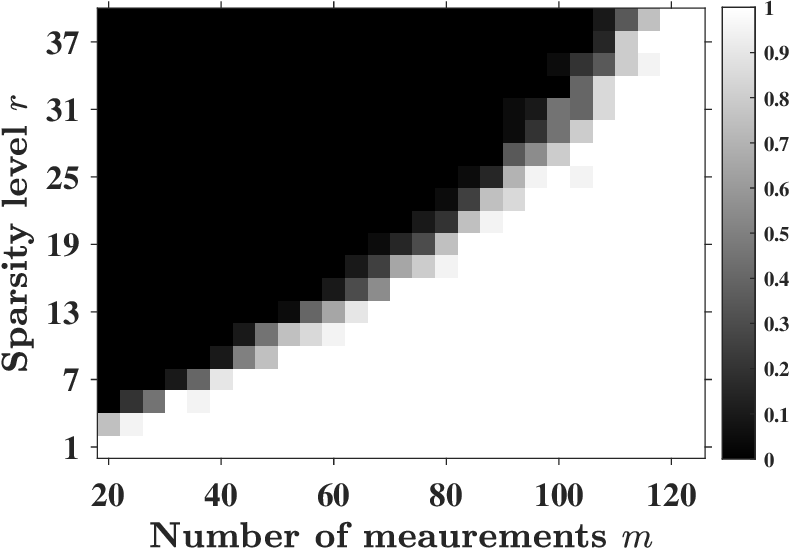}}
		\qquad
		\subfigure[ANM \cite{heckel2017generalized}]%
		{\includegraphics[width=.40\textwidth]{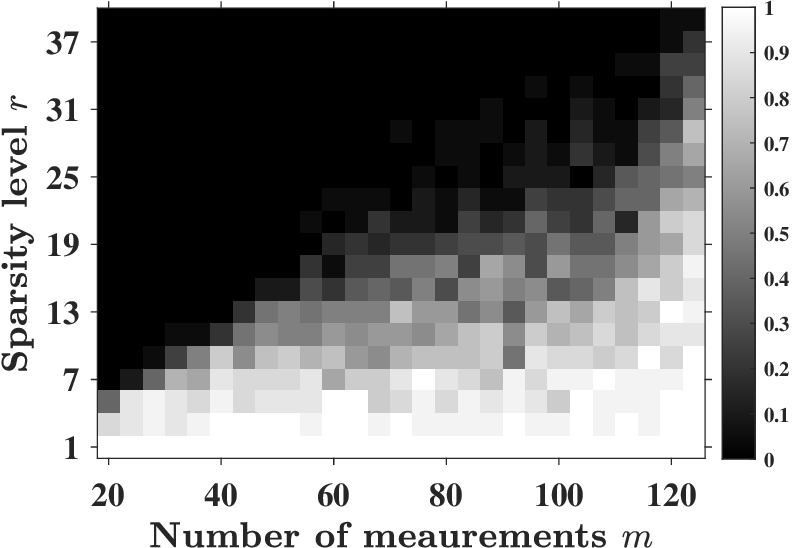}}
	\end{center}
	\caption{Phase transitions when $N=127$. White indicates success, while black indicates failure.} \label{fig:exp1}
\end{figure*}

\subsection{Robustness to Noise}
We evaluate the robustness of different methods in the presence of additive noise. We set $m=80$ and $r=5$, while the SNR varies from $5$ dB to $30$ dB. 
For both Hankel-NNM and ANM, we assume the noise energy $\twon{\m{\xi}}$ is known and impose the data-fidelity constraint $\twon{\m{A}\m{x}-\m{y}} \le \twon{\m{\xi}}$ in the optimization problem. For SIHT, we set $T_{max}=100$ and $K_{max} = 10$. For FIHT, we set $T_{max}=1000$.
Figure \ref{fig:noise} presents the relative RMSE and runtime under different SNR. We can see that SIHT and FIHT achieve nearly identical estimation accuracy and consistently outperform Hankel-NNM and ANM across all SNRs. Meanwhile, SIHT requires the least computational time among all compared methods, demonstrating its superior computational efficiency without sacrificing recovery accuracy.

\begin{figure*}[htb] 
	\centering
	\subfigure{\includegraphics[width=6.0cm]{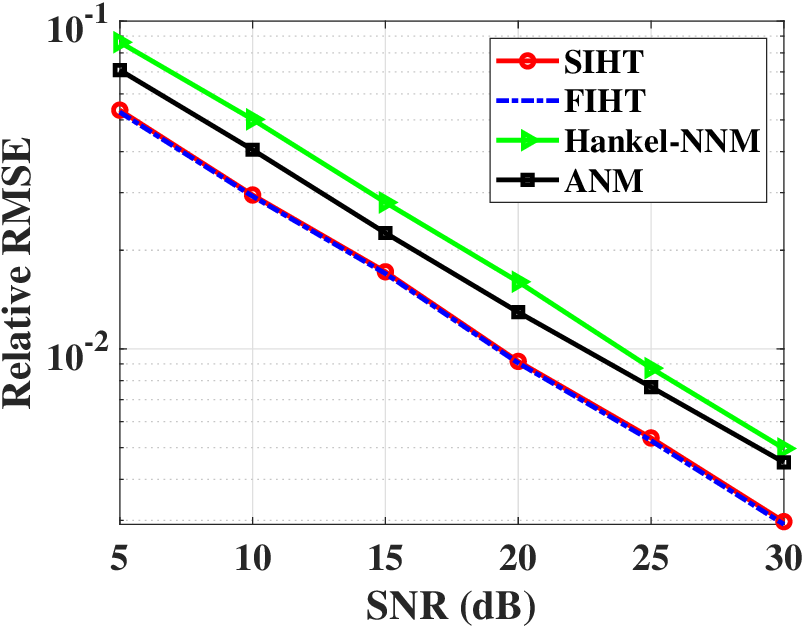}}
	\quad
	\subfigure{\includegraphics[width=6.0cm]{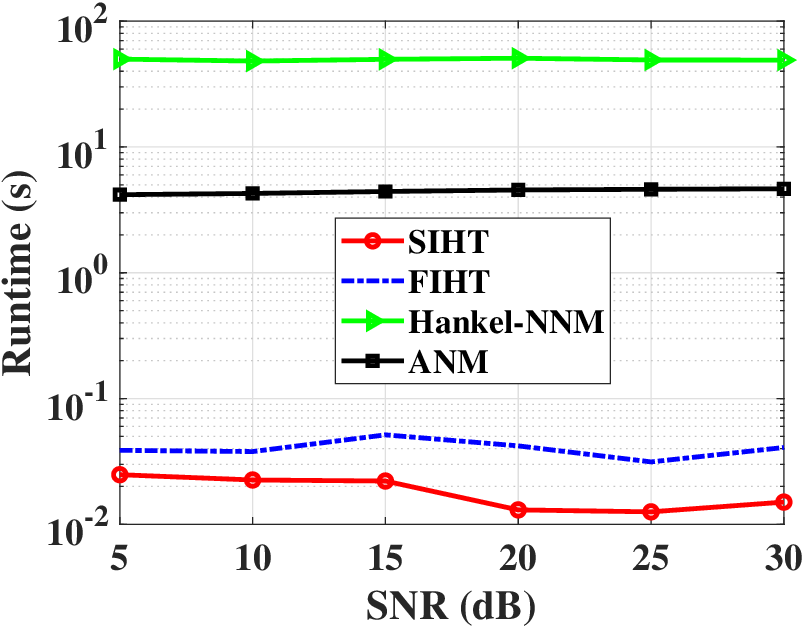}}
	\caption{Relative RMSE and runtime versus SNR. Here, \(N=127\), \(m=80\), and \(r=5\).}
	\label{fig:noise}
\end{figure*}

\subsection{Scalability to Large-Scale Problems}
Table \ref{tab:fixed_accuracy_time} reports the average runtime over $50$ Monte Carlo trials for achieving a target relative RMSE below $10^{-6}$ in the noiseless setting. As the problem dimension increases, SIHT consistently requires significantly less computational time than FIHT. These results demonstrate the superior scalability and computational efficiency of SIHT for large-scale spectrally sparse signal recovery. We do not include Hankel-NNM and ANM since their computational and memory costs become prohibitive as the problem size increases.

\begin{table}[t]
	\centering
	\caption{Average runtime comparison for achieving a target relative RMSE below $ 10^{-6}$.}
	\label{tab:fixed_accuracy_time}
	\begin{tabular}{c|c|c|cc}
		\hline
		$N$ & $m$ & $r$ & SIHT (s) & FIHT (s) \\
		\hline
		127  & 90   & 15  & {\bf 0.04}  & 0.17  \\
		527  & 400  & 50  & {\bf 0.74} & 1.81 \\
		1027 & 700  & 100 & {\bf 3.66} & 6.79 \\
		2027 & 1500 & 200 & {\bf 8.84} & 19.99 \\
		3027 & 2000 & 300 & {\bf 22.87} & 61.03 \\
		\hline
	\end{tabular}
\end{table}

\section{Proof of Theorem \ref{thm:recovery}} \label{sec:recovery}
Our proof proceeds in two stages: first, we prove that SIHT converges quadratically in a neighborhood of the ground truth; second, we show that the initialization falls within this neighborhood.

\subsection{Local Quadratic Convergence} \label{subsec:optimality}
Lemma \ref{lem:PTtPTt} establishes a restricted isometry property of $\cM$ on the tangent space $T$, which guarantees that \eqref{eq:subproblem} admits the unique solution and is crucial for proving Theorem \ref{thm:local}.

\begin{lemma}[Restricted Isometry of $\cM$ on Tangent Space $T$] \label{lem:PTtPTt}
	Let $T$ be the tangent space of a rank-$r$ matrix manifold. Let $\cM = \cP_T \sbra{ \cG \sbra{\m{B}^H\m{B} - \m{I}} \cG^* + \cI } \cP_T : \ T \to T$. Assume $\m{B}$ satisfies the LRIP of rank $2r$ with constant $\delta_{2r}<1$. 
	Then, $\mathcal{A}$ is positive definite and for all $\m{Z}\in T$, 
	\equ{
		\begin{split}
			\sbra{1 - \delta_{2r}} \frobn{\m{Z}} & \le \frobn{\cM \m{Z}} \le \sbra{1 + \delta_{2r}} \frobn{\m{Z}}, \\
			\frac{1}{1+\delta_{2r}} \frobn{\m{Z}} & \le \frobn{\cM^{-1} \m{Z}} \le \frac{1}{1-\delta_{2r}} \frobn{\m{Z}}. 
		\end{split} \label{eq:AinvZ}
	}
\end{lemma}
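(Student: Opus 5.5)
Since $\cM$ maps $T$ to itself and is self-adjoint there ($\cP_T$ is an orthogonal projection and $\cG(\m{B}^H\m{B}-\m{I})\cG^*+\cI$ is self-adjoint), I will bound the quadratic form $\langle \m{Z},\cM\m{Z}\rangle$ on $T$. Spectral calculus then gives all four inequalities in \eqref{eq:AinvZ}, as well as positive definiteness (the statement's ``$\mathcal{A}$'' is $\cM$).

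\textbf{Step 1: the quadratic form.} For $\m{Z}\in T$ we have $\cP_T\m{Z}=\m{Z}$, hence
\equs{
\langle \m{Z},\cM\m{Z}\rangle
= \twon{\m{B}\cG^*\m{Z}}^2 - \twon{\cG^*\m{Z}}^2 + \frobn{\m{Z}}^2 .
}
This quantity is real. Every element of $T$ has the form $\m{U}\m{C}_2^H+\m{C}_1\m{V}^H$, so $\rank(\m{Z})\le 2r$. The LRIP of rank $2r$ (Definition~\ref{def:LRIP}) then gives
\equs{
\abs{\twon{\m{B}\cG^*\m{Z}}^2-\twon{\cG^*\m{Z}}^2}\le \delta_{2r}\frobn{\m{Z}}^2 .
}
Consequently $(1-\delta_{2r})\frobn{\m{Z}}^2\le \langle \m{Z},\cM\m{Z}\rangle\le(1+\delta_{2r})\frobn{\m{Z}}^2$.

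\textbf{Step 2: self-adjointness and the spectrum.} On the finite-dimensional inner product space $T$ (viewed with the real inner product $\langle\cdot,\cdot\rangle_{\bR}$, or directly as a complex space), $\cM$ is Hermitian. Its eigenvalues are therefore real and lie in $[\,1-\delta_{2r},\,1+\delta_{2r}\,]$ by the Rayleigh quotient bound from Step 1. Because $\delta_{2r}<1$, every eigenvalue is positive, so $\cM$ is positive definite and invertible on $T$. Expanding $\m{Z}$ in an orthonormal eigenbasis gives $\frobn{\cM\m{Z}}\in[(1-\delta_{2r}),(1+\delta_{2r})]\frobn{\m{Z}}$. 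The eigenvalues of $\cM^{-1}$ are the reciprocals, which lie in $[1/(1+\delta_{2r}),1/(1-\delta_{2r})]$, and the same expansion yields the second line of \eqref{eq:AinvZ}.

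\textbf{Main obstacle.} The only subtle point is to check that $\cM$ is genuinely self-adjoint as a map $T\to T$, and that the LRIP applies. The latter requires the rank bound $2r$ for tangent vectors, which follows from the representation \eqref{eq:Tangentspace}. With those two facts in hand, the rest is the standard passage from Rayleigh-quotient bounds to norm bounds for Hermitian operators.
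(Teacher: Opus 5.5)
Your proof is correct and follows the paper's argument essentially step for step. You compute $\langle \cM\m{Z},\m{Z}\rangle = \frobn{\m{Z}}^2 + \twon{\m{B}\cG^*\m{Z}}^2 - \twon{\cG^*\m{Z}}^2$ on $T$, apply the rank-$2r$ LRIP to tangent vectors, and pass from the Rayleigh-quotient bounds to the spectrum of the self-adjoint operator $\cM$. The only cosmetic difference is that you bound $\cM^{-1}$ via reciprocal eigenvalues, whereas the paper substitutes $\m{Z}=\cM^{-1}\m{X}$ into the first line; the two are equivalent.
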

\begin{proof}
	The proof is deferred to Appendix \ref{subsec:PTtPTt}.
\end{proof}

\begin{theorem}[Local Quadratic Convergence] \label{thm:local}
	Assume $\m{B}$ satisfies the $3r$-LRIP with constant $\delta_{3r} < \frac{1}{2}$. Then the sequence $\{\m{X}_t\}$ in Algorithm \ref{alg:SIHT} satisfies
	\equs{
		\frobn{\m{X}_{t+1} - \cG\m{z}_\star}
		\le \frac{4}{\sigma_{\text{min}}\sbra{\cG\m{z}_\star }}  \frobn{\m{X}_t - \cG\m{z}_\star}^2 
		+ 8\sqrt{2r} \twon{\cG \m{B}^H \m{\xi}}, \quad \forall t \ge 0.
	}
	Further, if the initial estimate $\m{X}_0$ satisfies $\frobn{\m{X}_0 - \cG\m{z}_\star} \le \frac{1}{16} \sigma_{\min}(\cG\m{z}_\star)$, then
	\equs{
		\frobn{\m{X}_t - \cG\m{z}_\star} 
		\le 2^{-2^t} \sigma_{\min}(\cG\m{z}_\star) + 16 \sqrt{2r} \twon{\cG \m{B}^H \m{\xi}}, \quad \forall t \ge 0.
	}
\end{theorem}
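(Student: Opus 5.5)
Write $\m{X}_\star := \cG\m{z}_\star$ (rank $r$), $T := T_t$, and $\cM$ for the operator of Lemma \ref{lem:PTtPTt} built on $T$. The plan is to compare the exact tangent-space least-squares solution $\m{W}_{t+1}$ with $\cP_T\m{X}_\star$, and then use that hard thresholding at most doubles the error.

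\textbf{Step 1: the optimality condition for $\m{X}_\star$.} By \eqref{eq:linearEq}, $\m{W}_{t+1} = \cM^{-1}\cP_T\cG\m{B}^H\m{y}$ with $\m{y} = \m{B}\cG^*\m{X}_\star + \m{\xi}$. Since $\sbra{\cI-\cG\cG^*}\m{X}_\star = \m{0}$, we have $\cP_T\cG\m{B}^H\m{B}\cG^*\m{X}_\star = \cP_T\sbra{\cG(\m{B}^H\m{B}-\m{I})\cG^*+\cI}\m{X}_\star$. Split $\m{X}_\star = \cP_T\m{X}_\star + \cP_{T^\perp}\m{X}_\star$. This gives
\[
\m{W}_{t+1} - \cP_T\m{X}_\star = \cM^{-1}\cP_T\sbra{\cG(\m{B}^H\m{B}-\m{I})\cG^*+\cI}\cP_{T^\perp}\m{X}_\star + \cM^{-1}\cP_T\cG\m{B}^H\m{\xi}.
\]
In the first term the identity part vanishes, because $\cP_T\cP_{T^\perp}=0$. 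So it equals $\cM^{-1}\cP_T\cG(\m{B}^H\m{B}-\m{I})\cG^*\cP_{T^\perp}\m{X}_\star$.

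\textbf{Step 2: bounding the two terms.}
\begin{itemize}
\item \emph{Bias term.} Lemma \ref{lem:PTtPTt} gives $\frobn{\cM^{-1}}\le 1/(1-\delta_{2r})$. Every element of $T$ has rank at most $2r$. Also $\cP_{T^\perp}\m{X}_\star = (\m{I}-\m{U}\m{U}^H)\m{X}_\star(\m{I}-\m{V}\m{V}^H)$ has rank at most $r$. Polarizing the LRIP \eqref{eq:LRIP_matrix} on the sum of these two rank budgets gives $\abs{\langle \m{Y}, \cG(\m{B}^H\m{B}-\m{I})\cG^*\m{Z}\rangle}\le \delta_{3r}\frobn{\m{Y}}\frobn{\m{Z}}$ for $\m{Y}\in T$ and $\rank\m{Z}\le r$. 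The polarization has to handle the complex and real-part issue; I expect to run it through normalized vectors. Hence the bias term is at most $\frac{\delta_{3r}}{1-\delta_{2r}}\frobn{\cP_{T^\perp}\m{X}_\star}$.
\item \emph{Curvature estimate.} Next I invoke the standard bound $\frobn{\cP_{T^\perp}\m{X}_\star} \le \frobn{\m{X}_t-\m{X}_\star}^2/\sigma_{\min}(\m{X}_\star)$. It follows from $\cP_{T^\perp}\m{X}_\star = (\m{I}-\m{U}\m{U}^H)(\m{X}_\star-\m{X}_t)(\m{I}-\m{V}\m{V}^H)$ together with $\twon{(\m{I}-\m{U}\m{U}^H)\m{U}_\star}\le \twon{\m{X}_t-\m{X}_\star}/\sigma_{\min}(\m{X}_\star)$. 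I prove the latter by writing $\m{X}_\star = \m{U}_\star\m{U}_\star^H\m{X}_\star$ and using the pseudo-inverse.
\item \emph{Noise term.} It is at most $\frac{1}{1-\delta_{2r}}\frobn{\cP_T\cG\m{B}^H\m{\xi}}\le\frac{\sqrt{2r}}{1-\delta_{2r}}\twon{\cG\m{B}^H\m{\xi}}$, because $\cP_T$ maps into matrices of rank at most $2r$. Concretely, $\cP_T\m{E} = \m{U}\m{U}^H\m{E} + (\m{I}-\m{U}\m{U}^H)\m{E}\m{V}\m{V}^H$, and each piece has Frobenius norm at most $\sqrt r\twon{\m{E}}$.
\end{itemize}
With $\delta<1/2$, the factor $1/(1-\delta)\le 2$.

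\textbf{Step 3: hard thresholding and induction.} Combining, with $\frobn{\m{X}_\star - \cP_T\m{X}_\star} = \frobn{\cP_{T^\perp}\m{X}_\star}$,
\[
\frobn{\m{W}_{t+1}-\m{X}_\star}\le (1+2\delta_{3r})\frac{\frobn{\m{X}_t-\m{X}_\star}^2}{\sigma_{\min}} + 2\sqrt{2r}\twon{\cG\m{B}^H\m{\xi}}.
\]
Since $\m{X}_\star$ has rank $r$ and $\cT_r$ is the best rank-$r$ approximation, $\frobn{\m{X}_{t+1}-\m{X}_\star}\le 2\frobn{\m{W}_{t+1}-\m{X}_\star}$. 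Then $2(1+2\delta_{3r})\le 4$, and $4\sqrt{2r}\le 8\sqrt{2r}$, which yields the first claim.

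For the second claim, set $e_t := \frobn{\m{X}_t-\m{X}_\star}/\sigma_{\min}$ and $\eta := 8\sqrt{2r}\twon{\cG\m{B}^H\m{\xi}}/\sigma_{\min}$, so that $e_{t+1}\le 4e_t^2+\eta$. I induct on the hypothesis $e_t\le a_t+2\eta$ with $a_t = 2^{-2^t}$.
\begin{itemize}
\item If $\eta$ is small enough that $e_t \le 1/16$ is maintained, then $4e_t^2 \le 8a_t^2 + 32\eta^2$. It remains to absorb $8a_t^2$ into $a_{t+1} = a_t^2$.
\item The stated initial condition $e_0\le 1/16$ is much smaller than $a_0=1/2$. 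I would therefore run the recursion on a shifted sequence $e_t\le \frac18 2^{-2^{t}}$-type bounds, using $4(\frac18 a)^2\le \frac18 a^2$.
\item For large $\eta$ I would do a case split: if $\eta$ exceeds the $1/16$ threshold, the bound is trivial. I would arrange the constants so this case is covered.
\end{itemize}
Checking these constants is the main obstacle. The bias-term LRIP polarization in Step 2 is the most delicate analytic step.
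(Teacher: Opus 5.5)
Your proof of the one-step inequality is correct and follows the paper. It uses the same decomposition of $\cP_{T}\cG\m{B}^H\m{y}$, the inverse bound of Lemma \ref{lem:PTtPTt}, the $\delta_{3r}$ cross-term bound, the curvature bound of Lemma \ref{lem:wei}, and the factor $2$ from $\cT_r$. The paper imports the cross-term bound as Lemma \ref{lem:PTtI}; you re-derive it by phase-rotated polarization over sums of rank at most $3r$. Your orthogonal splitting even gives $\frobn{\cP_T\m{E}}\le\sqrt{2r}\,\twon{\m{E}}$, a factor $2$ better than the paper. Your shifted induction for the second claim is also sound within its range. From $e_t\le\frac18 2^{-2^t}+2\eta$ and $(x+y)^2\le 2x^2+2y^2$ you get $e_{t+1}\le\frac18 2^{-2^{t+1}}+32\eta^2+\eta$. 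This closes exactly when $\eta\le 1/32$, and it reproduces the bound $\frac{\sigma_{\min}}{8}2^{-2^t}+2b$ that the paper's two-regime argument actually yields.

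The gap is the large-noise case. For $\eta$ above the threshold the bound is not trivial, and no arrangement of constants covers it. The target $e_t\le 2^{-2^t}+2\eta$ then asks for $e_t=\cO(\eta)$, but nothing controls $e_t$ at that scale. The only a priori bound, from $\frobn{\m{W}_{t+1}}\le 2\frobn{\cP_T\cG\m{B}^H\m{y}}$, is of order $\frobn{\cG\m{z}_\star}/\sigma_{\min}(\cG\m{z}_\star)\le\sqrt{r}\kappa$ plus $\cO(\eta)$. The recursion $e_{t+1}\le 4e_t^2+\eta$ itself permits violation: with $e_0=1/16$ and $\eta=1$ it only yields $e_1\le 65/64$ and $e_2\le 4(65/64)^2+1\approx 5.13>\frac1{16}+2$.

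So the second claim needs an explicit smallness assumption on the noise. The paper's proof uses one tacitly. After $e_t<\sqrt{b/a}$ gives $e_{t+1}\le 2b$, keeping $e_{t+k}\le 2b$ for later $k$ requires $a(2b)^2\le b$, i.e.\ $4ab\le 1$, i.e.\ $\eta\le 1/16$ in your normalization. The correct repair is to add such a hypothesis rather than assert that the large-noise case is vacuous. For your induction it reads $\eta\le 1/32$, i.e.\ $256\sqrt{2r}\,\twon{\cG\m{B}^H\m{\xi}}\le\sigma_{\min}(\cG\m{z}_\star)$. You should then note that, by Lemma \ref{lem:noise}, it holds under the sample-size conditions of Theorem \ref{thm:recovery} with a large enough constant.
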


\begin{proof}
	The proof is deferred to Appendix \ref{subsec:local}.
\end{proof}

\subsection{Initialization Error}
We provide the noise bound as follows.

\begin{lemma}[Concentration of Noise] \label{lem:noise}
	Assume the noise vector $\m{\xi}\in\bC^m$ has i.i.d.\ entries following $\cC\cN(0,\sigma^2/m)$.
	Assume that $\m{B}\in\bC^{m\times N}$ has i.i.d.\ entries following $\cC\cN(0,1/m)$ and is independent of $\m{\xi}$.
	Let $N>1$ and assume $m\ge(c_s+\log N)\log N$.
	Then, with probability at least $1-N^{-(c_s+\log N)}$, it holds that
	\equs{
		\twon{\cG\m{B}^H\m{\xi}}
		\le \sigma\sqrt{\frac{C(c_s+\log N)\log N}{m}},
	}
	where $C>0$ is a universal constant.
\end{lemma}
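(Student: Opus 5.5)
Our plan is to condition on $\m{\xi}$, use that $\m{B}$ is Gaussian and independent of $\m{\xi}$, and then apply a net argument. Write $\m{w} := \m{B}^H\m{\xi}\in\bC^N$. Conditioned on $\m{\xi}$, the rows of $\m{B}^H$ are independent, so $\m{w}$ has i.i.d.\ entries $\cC\cN(0,\twon{\m{\xi}}^2/m)$. Hence $\m{w} \overset{d}{=} (\twon{\m{\xi}}/\sqrt{m})\,\m{g}$ with $\m{g}$ standard complex Gaussian in $\bC^N$ and independent of $\twon{\m{\xi}}$. The problem therefore splits into two pieces: (a) a bound on $\twon{\m{\xi}}$, and (b) a bound on the spectral norm of the random Hankel-type matrix $\cG\m{g} = \cH\cD^{-1}\m{g}$.

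For (a), $m\twon{\m{\xi}}^2/\sigma^2$ is a $\chi^2$ variable with $2m$ real degrees of freedom (up to normalization). Laurent--Massart gives $\twon{\m{\xi}}^2 \le \sigma^2(1+ 2\sqrt{s/m}+2s/m)$ with probability $1-e^{-s}$. We take $s=(c_s+\log N)\log N$; since $m\ge s$ by assumption, this yields $\twon{\m{\xi}}\le C\sigma$. The failure probability $e^{-s}=N^{-(c_s+\log N)}$ matches the target, and we absorb factors of $2$ by adjusting constants.

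For (b), $\cG\m{g}=\sum_{k} g_k \m{H}_k/\sqrt{w_k}$, where $\m{H}_k$ is the $0/1$ matrix of the $k$-th antidiagonal and $w_k$ is its length. This is a matrix Gaussian series, so we plan to use the matrix Gaussian concentration inequality (Tropp). The variance parameter is $v=\max\{\twon{\sum_k \m{H}_k\m{H}_k^H/w_k},\twon{\sum_k \m{H}_k^H\m{H}_k/w_k}\}$. Each row $i$ meets antidiagonal $k$ at most once, so $\sum_k \m{H}_k\m{H}_k^H/w_k$ is diagonal with entries $\sum_{j}1/w_{i+j}$. Since $w_k\ge \min(k+1,N-k)$, this sum is $\lesssim \log N$, and similarly for the other side. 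Tropp's bound then gives $\twon{\cG\m{g}}\le \sqrt{2v\,(\log(n_1+n_2)+s)}\lesssim\sqrt{(c_s+\log N)\log N}$ with probability $1-e^{-s}$, again taking $s=(c_s+\log N)\log N$. A complex Gaussian is handled by splitting into real and imaginary parts.

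Combining (a) and (b) gives $\twon{\cG\m{B}^H\m{\xi}} \le (C\sigma/\sqrt{m})\sqrt{(c_s+\log N)\log N}$.

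The main obstacle is the probability bookkeeping: a union bound over two events gives $2N^{-(c_s+\log N)}$, not $N^{-(c_s+\log N)}$. We will fix this by running both tail bounds at $s+\log 2$ and absorbing the extra $\log 2$ into $C$, using $s\ge \log 2$ when $N>1$. A secondary point needing care is the harmonic-sum estimate for $v$. The weights $w_k$ are capped at $n_1$, so the middle antidiagonals contribute about $n_2/n_1=c_s$ terms each of size $1/n_1$. This gives $v\lesssim \log N + c_s$ at worst, which still fits within the stated $(c_s+\log N)\log N$.
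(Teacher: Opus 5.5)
Your step (b) does not give the bound you claim. The tail form of the matrix Gaussian series inequality, $\bP\lbra{\twon{\cG\m{g}}\ge t}\le (n_1+n_2)e^{-t^2/(2v)}$, uses the matrix variance $v$ for the fluctuations as well as for the mean. At failure probability $e^{-s}$ it therefore gives only $t=\sqrt{2v(\log(n_1+n_2)+s)}\ge\sqrt{2vs}$.

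Your own computation shows that $v$ is not $O(1)$. Row $0$ alone contributes $\sum_{k<n_1}1/(k+1)\approx\log n_1$, plus about $c_s-1$ from the middle antidiagonals, so $v$ is of order $c_s+\log N$. The target probability $N^{-(c_s+\log N)}$ forces $s=(c_s+\log N)\log N$. With these values you get only $\twon{\cG\m{g}}\lesssim (c_s+\log N)\sqrt{\log N}$. This exceeds the target $\sqrt{(c_s+\log N)\log N}$ by the unbounded factor $\sqrt{c_s+\log N}$; even with your first estimate $v\lesssim\log N$ you would lose $\sqrt{\log N}$.

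Your closing remark that $v\lesssim c_s+\log N$ ``fits within'' $(c_s+\log N)\log N$ compares $v$ with the squared target. It overlooks that $v$ is multiplied by $\log(n_1+n_2)+s$.

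The fix is to separate the mean from the fluctuation, which is what the paper does.
\begin{itemize}
\item For the mean, use the expectation form of Tropp's bound, $\bE\twon{\cG\m{g}}\le\sqrt{2v\log(n_1+n_2)}\lesssim\sqrt{(c_s+\log N)\log N}$. Your variance computation suffices for this; the paper simply cites it as \cite[Lemma 4]{wu2026lifted}.
\item For the fluctuation, apply scalar Gaussian concentration to $\m{g}\mapsto\twon{\cG\m{g}}$. This map is $1$-Lipschitz because $\twon{\cG\m{h}}\le\frobn{\cG\m{h}}=\twon{\m{h}}$. The fluctuation is then governed by the Lipschitz constant rather than by $v$, so at probability $e^{-s}$ it costs only $O(\sqrt{s})=O\sbra{\sqrt{(c_s+\log N)\log N}}$.
\end{itemize}

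The rest of your plan is sound and matches the paper's structure: the conditioning on $\m{\xi}$, the $\chi^2$ bound on $\twon{\m{\xi}}$ using $m\ge(c_s+\log N)\log N$, and the $\log 2$ adjustment that keeps the union bound at $N^{-(c_s+\log N)}$.
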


\begin{proof}
	The proof is deferred to Appendix \ref{append:noise}.
\end{proof}

Lemma \ref{lem:init1} establishes a upper bound on the one-step initialization error and extends \cite[Lemma 6]{wu2026lifted} to the noisy setting.

\begin{lemma}[One-Step Initialization Error] \label{lem:init1}
	Let $\m{X}_0 = \cT_r\sbra{ \cG \m{B}^H \m{y} }$ where $\m{y} = \m{B}\m{z}_\star + \m{\xi}$. Then,
	\equ{
		\frobn{\m{X}_0 - \cG\m{z}_{\star}} 
		\le 2 \delta_{2r} \frobn{\cG\m{z}_{\star}} 
		+ 2\sqrt{2r} \twon{\cG \m{B}^H \m{\xi}}.
		\label{eq:L0gz}
	}
\end{lemma}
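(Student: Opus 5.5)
The plan is to treat $\m{X}_0$ as the best rank-$r$ approximation of $\m{W}:=\cG\m{B}^H\m{y}=\cG\m{B}^H\m{B}\m{z}_\star+\cG\m{B}^H\m{\xi}$, compare it with the rank-$r$ matrix $\cG\m{z}_\star$, and bound the error through a duality argument on rank-$2r$ test matrices.

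\textbf{Step 1: reduce to a rank-$2r$ projection.} Since $\rank(\m{X}_0)\le r$ and $\rank(\cG\m{z}_\star)\le r$, the difference $\m{X}_0-\cG\m{z}_\star$ has rank at most $2r$. Let $\cP$ be the orthogonal projection onto the matrices whose column space lies in $\mathrm{col}(\m{X}_0)+\mathrm{col}(\cG\m{z}_\star)$ and whose row space lies in $\mathrm{row}(\m{X}_0)+\mathrm{row}(\cG\m{z}_\star)$. Every matrix in this range has rank at most $2r$. By the triangle inequality,
\equs{
\frobn{\m{X}_0-\cG\m{z}_\star}\le\frobn{\m{X}_0-\cP\m{W}}+\frobn{\cP\m{W}-\cG\m{z}_\star}.
}
Since $\m{X}_0$ is the best rank-$r$ approximation of $\m{W}$ and $\m{X}_0=\cP\m{X}_0$, it is also the best rank-$r$ approximation of $\cP\m{W}$ in the projected space. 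Because $\cG\m{z}_\star$ is another rank-$r$ candidate, $\frobn{\m{X}_0-\cP\m{W}}\le\frobn{\cG\m{z}_\star-\cP\m{W}}$. Hence
\equs{
\frobn{\m{X}_0-\cG\m{z}_\star}\le 2\frobn{\cP(\m{W}-\cG\m{z}_\star)}.
}
This yields the factor $2$ in \eqref{eq:L0gz}.

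\textbf{Step 2: split into signal and noise parts.} Write $\m{W}-\cG\m{z}_\star=\cG(\m{B}^H\m{B}-\m{I})\cG^*\cG\m{z}_\star+\cG\m{B}^H\m{\xi}$, using $\cG^*\cG=\cI$.

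For the noise part, any matrix in the range of $\cP$ has rank at most $2r$. Therefore $\frobn{\cP\m{M}}\le\sqrt{2r}\,\twon{\cP\m{M}}\le\sqrt{2r}\,\twon{\m{M}}$, since a projection of the form $\m{P}_L(\cdot)\m{P}_R$ does not increase the spectral norm. Applying this to $\m{M}=\cG\m{B}^H\m{\xi}$ gives the term $2\sqrt{2r}\twon{\cG\m{B}^H\m{\xi}}$.

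For the signal part, use duality:
\equs{
\frobn{\cP\cG(\m{B}^H\m{B}-\m{I})\cG^*\m{Y}}=\sup_{\m{Z}}\left\langle \m{Z},\cG(\m{B}^H\m{B}-\m{I})\cG^*\m{Y}\right\rangle,
}
where $\m{Y}=\cG\m{z}_\star$ and the supremum ranges over $\m{Z}=\cP\m{Z}$ with $\frobn{\m{Z}}=1$. The inner product equals $\langle\m{B}\cG^*\m{Z},\m{B}\cG^*\m{Y}\rangle-\langle\cG^*\m{Z},\cG^*\m{Y}\rangle$. Both $\m{Z}$ and $\m{Y}$ lie in the rank-$2r$ subspace $\mathrm{range}(\cP)$, so any linear combination of them has rank at most $2r$.

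\textbf{Step 3: the main obstacle.} The key step is to turn LRIP, which is a statement about quadratic forms, into a bound on this bilinear cross term. The plan is the standard polarization argument. Normalize so that $\frobn{\m{Y}}=1$, and apply \eqref{eq:LRIP_matrix} to $\m{Z}\pm\m{Y}$ and $\m{Z}\pm\mathrm{i}\m{Y}$, all of which have rank at most $2r$. Subtracting these inequalities gives
\equs{
\abs{\langle\m{B}\cG^*\m{Z},\m{B}\cG^*\m{Y}\rangle-\langle\cG^*\m{Z},\cG^*\m{Y}\rangle}\le\delta_{2r}\frobn{\m{Z}}\frobn{\m{Y}}.
}
The care needed is to handle the real and imaginary parts separately. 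One also has to confirm that the quadratic form $\twon{\m{B}\cG^*\cdot}^2-\twon{\cG^*\cdot}^2$ is Hermitian, so that polarization applies with constant $\delta_{2r}$ rather than $2\delta_{2r}$.

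With this bound, the signal part is at most $\delta_{2r}\frobn{\cG\m{z}_\star}$. Substituting into the estimate from Step 1 gives $2\delta_{2r}\frobn{\cG\m{z}_\star}+2\sqrt{2r}\twon{\cG\m{B}^H\m{\xi}}$, which is \eqref{eq:L0gz}.
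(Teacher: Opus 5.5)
Your proposal is correct and follows essentially the paper's route. You project onto a subspace of rank-$\le 2r$ matrices containing $\m{X}_0$ and $\cG\m{z}_\star$ to get the factor $2$, then bound the signal part by $\delta_{2r}\frobn{\cG\m{z}_\star}$ via the LRIP and the noise part by $\sqrt{2r}\twon{\cG\m{B}^H\m{\xi}}$; the only differences are that the paper uses the one-sided projection $\m{Q}\m{Q}^H(\cdot)$ and cites \cite[Lemmas 6 and 12]{wu2026lifted} for the two steps you prove by hand.

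Two justifications need tightening. In Step 1, $\cP\m{X}_0=\m{X}_0$ alone does not imply that $\m{X}_0$ is a best rank-$r$ approximation of $\cP\m{W}$. Your two-sided form of $\cP$ does give it, because $\cP(\m{W}-\m{X}_0)$ stays orthogonal to $\m{X}_0$ on both sides and has spectral norm at most $\sigma_{r+1}(\m{W})$. In Step 3, the duality only needs $\mathrm{Re}\langle\m{Z},\cdot\rangle$, and the $\m{Z}\pm\m{Y}$ identity bounds this by exactly $\delta_{2r}$. By contrast, the four-term complex identity gives only $2\delta_{2r}$, and separate real/imaginary bounds give only $\sqrt{2}\delta_{2r}$, unless you also rotate the phase of $\m{Z}$.
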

\begin{proof}
	The proof is deferred to Appendix \ref{append:init1}.
\end{proof}

Lemma \ref{lem:init2} establishes an upper bound on the spectral norm error of the estimate by the resampled FIHT and extends \cite[Lemma 8]{wu2026lifted} to the noisy setting.

\begin{lemma}[Resampling Initialization Error] \label{lem:init2}
	Assume $\m{B}$ has i.i.d.\ entries following $\cC\cN\sbra{0,1/m}$ and $\m{\xi}$ has i.i.d.\ entries following $\cC\cN(0,\sigma^2/m)$. Let $\m{X}_0 = \tilde{\m{X}}_T$,  which is the output of Algorithm \ref{alg:Init-resamp}. Then,
	\equ{
		\twon{ \m{X}_0 - \cG\m{z}_\star} 
		\le 2^{-T-6} \sigma_{\min}(\cG\m{z}_\star) + \sigma \sqrt{ \frac{ C (c_s + \log N)\log N} {\tilde{m}} }. \label{eq:LTgz}
	}
	with probability at least $1 - (3+2T) N^{- (c_s + \log N)}$, where $C>0$ is a universal constant, provided $\tilde{m} \gtrsim \kappa^2 \sbra{ r + \frac{\sigma^2}{\sigma_{\max}^2\sbra{\cG\m{z}_\star}} } \max \lbra{c_s, \log N} \log N$.
\end{lemma}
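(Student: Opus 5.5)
The plan is to follow the resampled FIHT analysis of \cite[Lemma 8]{wu2026lifted} and carry the noise through it as an additive term. Algorithm \ref{alg:Init-resamp} (Appendix \ref{append:FIHT}) partitions the measurements into $2T+3$ independent blocks of size $\tilde m$, each with its own sensing matrix $\m{B}_t$ and noise $\m{\xi}_t$. The first block gives the one-step initialization $\tilde{\m{X}}_0=\cT_r(\cG\m{B}_0^H\m{y}_0)$. Each subsequent FIHT step is
\[
\tilde{\m{X}}_{t+1}=\cT_r\cP_{\tilde T_t}\cG\sbra{\tilde{\m{z}}_t-\m{B}_{t+1}^H(\m{B}_{t+1}\tilde{\m{z}}_t-\m{y}_{t+1})},
\]
with fresh samples. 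Substituting $\m{y}_{t+1}=\m{B}_{t+1}\m{z}_\star+\m{\xi}_{t+1}$ splits the pre-projection matrix into three parts:
\[
\cG\m{z}_\star+\cP_{\tilde T_t}\cG(\cI-\m{B}_{t+1}^H\m{B}_{t+1})(\tilde{\m{z}}_t-\m{z}_\star)-(\cI-\cP_{\tilde T_t})\cG\m{z}_\star+\cP_{\tilde T_t}\cG\m{B}_{t+1}^H\m{\xi}_{t+1}.
\]
The last term is the only new ingredient.

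\textbf{Step 1 (base case).} I will apply Lemma \ref{lem:init1} in spectral norm. I expect a bound of the form $\twon{\tilde{\m{X}}_0-\cG\m{z}_\star}\le 2\twon{\cG(\cI-\m{B}_0^H\m{B}_0)\cG^*\cG\m{z}_\star}+2\twon{\cG\m{B}_0^H\m{\xi}_0}$. The first term is handled by the spectral-norm concentration used in \cite{wu2026lifted}, which gives $\lesssim \sqrt{r\max\{c_s,\log N\}\log N/\tilde m}\,\sigma_{\max}$; under the sample condition this is at most $\sigma_{\min}/C'$ with $C'$ large. The second term is bounded by Lemma \ref{lem:noise}.

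\textbf{Step 2 (contraction).} Conditionally on $\tilde{\m{X}}_t$, which is independent of block $t+1$, the fixed matrix $\tilde{\m{z}}_t-\m{z}_\star$ lets me use the same fixed-vector spectral concentration. This bounds the middle term by a small multiple of $\sqrt{r}\,\twon{\tilde{\m{X}}_t-\cG\m{z}_\star}$ times $\sqrt{r\log N\cdots/\tilde m}$. The curvature term $(\cI-\cP_{\tilde T_t})\cG\m{z}_\star$ is bounded by $\twon{\tilde{\m{X}}_t-\cG\m{z}_\star}^2/\sigma_{\min}$. I will then apply Weyl or Wedin perturbation for $\cT_r$, valid once the total perturbation is below $\sigma_{\min}/2$. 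This should give
\[
e_{t+1}\le \tfrac12 e_t+C\twon{\cG\m{B}_{t+1}^H\m{\xi}_{t+1}},\qquad e_t:=\twon{\tilde{\m{X}}_t-\cG\m{z}_\star}.
\]
Here $\twon{\cP_{\tilde T}\m{W}}\le 2\twon{\m{W}}$ controls the projected noise, and Lemma \ref{lem:noise} is applied once per block.

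\textbf{Step 3 (unroll and count).} Unrolling gives $e_T\le 2^{-T}e_0+2C\max_t\twon{\cG\m{B}_t^H\m{\xi}_t}$. With $e_0\le 2^{-6}\sigma_{\min}$, ensured by the constants in the sample condition, this is \eqref{eq:LTgz}. The union bound runs over $T+1$ noise events, $T+1$ concentration events and one extra event, for a total of at most $(3+2T)$ events each failing with probability $N^{-(c_s+\log N)}$.

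The noise also enters the requirement that the perturbation stay below $\sigma_{\min}/2$: it requires $\sigma\sqrt{\log\cdots/\tilde m}\lesssim\sigma_{\min}$. Since $\sigma_{\min}=\sigma_{\max}/\kappa$, this is exactly the extra term $\kappa^2\sigma^2/\sigma_{\max}^2$ in the sample condition.

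The main obstacle is Step 2's perturbation step. The aim is a contraction in \emph{spectral} norm with only $\kappa^2 r$ (not $r^2$) samples. Done naively, the spectral bound on the sampled operator applied to a fixed rank-$2r$ matrix carries an extra $\sqrt r$, as Frobenius would. I would first try to reuse the exact estimate from \cite[Lemma 8]{wu2026lifted} verbatim, treating the noise as an additive perturbation that passes unchanged through $\cT_r$, $\cP_{\tilde T}$ and the Wedin bound.
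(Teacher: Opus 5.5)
Your plan is the paper's proof. The paper uses a spectral-norm base case $\twon{\tilde{\m X}_0-\cG\m z_\star}\le 2\twon{\cG(\m B_0^H\m B_0-\cI)\m z_\star}+2\twon{\cG\m B_0^H\m\xi_0}$. It then runs the FIHT recursion with the noise carried as the extra term $\cP_{\tilde T_t}\cG\m B_{t+1}^H\m\xi_{t+1}$, controlled by $\twon{\cP_{\tilde T}\m W}\le 2\twon{\m W}$ and Lemma \ref{lem:noise}. The non-noise estimates are imported from \cite{wu2026lifted}.

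The ``main obstacle'' you flag comes from counting $\sqrt r$ twice. As you wrote Step 2, the middle term would be about $r\sqrt{\log N\cdots/\tilde m}\,e_t$, which does not contract under $\tilde m\gtrsim\kappa^2 r$. The fixed-vector Hankel concentration you already use in Step 1 carries no factor of $r$. For $\m v$ independent of block $t+1$, it gives $\twon{\cG(\m B_{t+1}^H\m B_{t+1}-\cI)\m v}\lesssim\sqrt{(c_s+\log N)\log N/\tilde m}\,\twon{\m v}$. The only $\sqrt r$ enters through $\twon{\tilde{\m z}_t-\m z_\star}\le\frobn{\tilde{\m X}_t-\cG\m z_\star}\le\sqrt{2r}\,e_t$. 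In Step 1 it entered the same way, through $\twon{\m z_\star}\le\sqrt r\,\sigma_{\max}(\cG\m z_\star)$.

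So the middle term is at most $e_t/16$ once $\tilde m\gtrsim r\max\{c_s,\log N\}\log N$. Combine this with $\twon{\cT_r(\m W)-\cG\m z_\star}\le 2\twon{\m W-\cG\m z_\star}$, which holds unconditionally, so no Wedin-type condition is needed. You then get exactly the paper's recursion $e_{t+1}\le(\frac18+\frac{2e_t}{\sigma_{\min}(\cG\m z_\star)})e_t+4\sigma\sqrt{C(c_s+\log N)\log N/\tilde m}$. With $e_0\le\sigma_{\min}(\cG\m z_\star)/64$ and induction, this unrolls to \eqref{eq:LTgz}.

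One factual correction: Algorithm \ref{alg:Init-resamp} uses $T+1$ blocks $\Omega_0,\dots,\Omega_T$, so $m=(T+1)\tilde m$, not $2T+3$ blocks. The number $2T+3$ counts failure events (three for the base case, two per iteration), and your union bound already gets that right.
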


\begin{proof}
	The proof is deferred to Appendix \ref{append:FIHT}.
\end{proof}

\subsection{Proof of Theorem \ref{thm:recovery}}
Since $\m{z}_t = \cG^*\m{X}_t$ and $\cG^*\cG = \cI$, we have
\equ{
	\twon{\m{z}_t-\m{z}_\star}
	= \twon{ \cG^*\sbra{ \m{X}_t - \cG\m{z}_\star } }
	\le \frobn{\m{X}_t - \cG\m{z}_\star}.
	\label{eq:zt+1}
}

	(i) One-step initialization: It follows from Lemma \ref{lem:noise} and Lemma \ref{lem:init1} that 
	\equs{
		\frobn{\m{X}_0 - \cG\m{z}_{\star}} 
		\le 2 \delta_{2r} \sqrt{r}\,\twon{\cG\m{z}_{\star}} 
		+ \sigma \sqrt{ \frac{ C r (c_s + \log N)\log N } {m} }.
	}
	To satisfy the condition $\frobn{\m{X}_0 - \cG\m{z}_\star}
	\le \tfrac{1}{16}\sigma_{\min}\sbra{\cG\m{z}_\star}$ in Theorem \ref{thm:local}, 
	we control the two terms separately. For the first term, it suffices that
	$
	\delta_{2r} \le \frac{1}{64\sqrt{r}\kappa}.
	$
	By Theorem \ref{thm:LRIP}, this holds provided
	$
	m \gtrsim C \kappa^2 r^2 \max \lbra{c_s, \log N} \log N,
	$
	which also ensures the conditions $\delta_{3r}<1/2$ in Theorem \ref{thm:local} and $\delta_1 < 1$ in Lemma \ref{lem:noise} are satisfied.
	For the second term, it suffices that
	$
	m \ge C \kappa^2 r \frac{\sigma^2}{\sigma_{\max}^2\sbra{\cG\m{z}_\star}} 
	\max\lbra{ c_s, \log N } \log N.
	$
	Combining the above two conditions, we obtain \eqref{eq:m_guarantee1}.
	Hence, by Theorem \ref{thm:local}, \eqref{eq:zt+1}, and Lemma \ref{lem:noise}, we obtain \eqref{eq:main_recovery}. 	
	
	(ii) Resampling initialization: We have
	\equs{
		\begin{split}
			\frobn{\m{X}_0 - \cG\m{z}_\star} 
			& \le \sqrt{2r} \twon{\m{X}_0  - \cG\m{z}_\star} \\
			& \le \sqrt{2r} \sbra{ 2^{-T - 6} \sigma_{\min}(\cG\m{z}_\star) + \sigma \sqrt{ \frac{ C (c_s + \log N)\log N} {\tilde{m}} } } \\
			& \le \frac{\sqrt{2}}{64} \sqrt{r} 2^{- \log_2 r } \sigma_{\text{min}} \sbra{\cG\m{z}_\star} + \frac{1}{32} \sigma_{\text{min}} \sbra{\cG\m{z}_\star} \\
			& \le \frac{1}{16} \sigma_{\text{min}} \sbra{\cG\m{z}_\star},
		\end{split} 
	}
	where the second inequality follows from Lemma \ref{lem:init2}, and the third one follows from $T \ge \lceil \log_2 r \rceil$ and the bound of $\tilde{m} $.
	Then by Theorem \ref{thm:local} and Lemma \ref{lem:noise}, we obtain \eqref{eq:main_recovery}.
	The requirement on $m$ in \eqref{eq:m_guarantee2} follows from the identity $m = (T+1)\tilde{m}$ together with the bound of $\tilde{m}$.
	
\section{Conclusion} \label{sec:conclusion}
We have proposed the first provably quadratically convergent nonconvex algorithm SIHT for spectral compressed sensing. SIHT achieves near-optimal sample complexity together with state-of-the-art computational and memory efficiency. Numerical experiments demonstrated that the superiority of SIHT. An interesting direction for future work is to develop second-order fast algorithms for low-rank Hankel matrix completion problem \cite{cai2019fast,chen2014robust}.

\appendix 

\section{Proof of Theorem \ref{thm:upperbound}} \label{append:upperbound}
	Consider the model \eqref{eq:P2}. By the optimality of $\widehat{\m{z}}_0$ and the feasibility of $\m{z}_\star$, we have $
	\twon{\m{B}\widehat{\m{z}}_0 - \m{y}} \le \twon{\m{B} \m{z}_\star - \m{y}}$. 
Substituting $\m{y}=\m{B}\m{z}_\star + \m{\xi}$ into it yields $
\twon{ \m{B} \sbra{ \widehat{\m{z}}_0 - \m{z}_\star} - \m{\xi} }  \le \twon{\m{\xi}}$.
It follows that
\equ{
	\twon{ \m{B} \sbra{ \widehat{\m{z}}_0 - \m{z}_\star} } 
	\le \twon{ \m{B} \sbra{ \widehat{\m{z}}_0 - \m{z}_\star} - \m{\xi} } + \twon{\m{\xi}} \le 2 \twon{\m{\xi}}. \label{eq:Bzz1}
}
On the other hand, using $\cG^*\cG=\cI$, we have
\equ{
	\twon{ \m{B} \sbra{ \widehat{\m{z}}_0 - \m{z}_\star} } 
	= \twon{ \m{B} \cG^* \sbra{ \cG \widehat{\m{z}}_0 - \cG\m{z}_\star} }  
	\ge  \sqrt{1-\delta_{2r}} \twon{ \widehat{\m{z}}_0 - \m{z}_\star }, \label{eq:Bzz2}
}
where the inequality follows from the LRIP and $\rank\sbra{\cG \widehat{\m{z}}_0 } = \rank\sbra{ \cG \m{z}_\star } = r$.
Combining \eqref{eq:Bzz1} and \eqref{eq:Bzz2} yields the upper bound of $\twon{ \widehat{\m{z}}_0 - \m{z}_\star }$.

Consider the model \eqref{eq:P5}. By the optimality of $\widehat{\m X}$ and the feasibility of $\cG\m z_\star$, then
\equs{
	\twon{\m B\cG^*\widehat{\m X}-\m y}^2
	+ \frobn{(\cI-\cG\cG^*)\widehat{\m X}}^2
	\le \twon{\m B\cG^*\cG\m z_\star-\m y}^2
	+ \frobn{(\cI-\cG\cG^*)\cG \m{z}_\star }^2.
}
Using $\m y=\m B\m z_\star+\m\xi$, we have $\twon{ \m B\cG^*(\widehat{\m X}-\cG\m z_\star)-\m\xi }^2
+ \frobn{ (\cI-\cG\cG^*)\widehat{\m X} }^2
\le \twon{ \m\xi }^2$.
It follows that $ \twon{ \m B\cG^*(\widehat{\m X}-\cG\m z_\star) - \m{\xi} } \le \twon{ \m\xi }$. The proof is completed by using the  argument as in \eqref{eq:Bzz1} and \eqref{eq:Bzz2}.

\section{Proof of Theorem \ref{thm:lowerbound}} \label{append:lowerbound}
The proof relies on the construction of two indistinguishable models. Consider the case $\m{B} = \m{I}$, under which $\m{B}$ satisfies LRIP and the model reduces to $\m{y} = \m{z}_\star + \m{\xi}$.
Define $r$-length all-ones vector $\m{1}_r$, and construct two $N$-length vectors $
\m{z}_1 = \frac{\varepsilon}{\sqrt{r}} \begin{bmatrix}
	\m{1}_r \\  \m{0}
\end{bmatrix}$ and $ 
\m{\xi}_1 = \frac{\varepsilon}{\sqrt{r}} \begin{bmatrix}
	\m{0} \\ \m{1}_r
\end{bmatrix}$.
Define the swapped pair $\m{z}_2 = \m{\xi}_1$ and $\m{\xi}_2 = \m{z}_1$. 
Note that $\rank\sbra{\cG\m{z}_1} = \rank\sbra{\cG\m{\xi}_1} = r$. Moreover, we have $\twon{\m{\xi}_1} = \twon{\m{\xi}_2} = \varepsilon$.
Thus $(\m{B},\m{z}_i,\m{\xi}_i)\in F_r(\varepsilon)$ for $i=1,2$.

Since $\m{z}_1 + \m{\xi}_1 = \m{z}_2 + \m{\xi}_2$, the two models yield the same observation $\m{y}$. Thus,
\equs{
	\begin{split}
		\inf_{\widehat{\m{z}}} \sup_{ \sbra{\m{B}, \m{z}_\star, \m{\xi}} \in F_r \sbra{\varepsilon} } \twon{\widehat{\m{z}} - \m{z}_\star}
		& \ge \inf_{\widehat{\m{z}}} \max \lbra{ \twon{\widehat{\m{z}} - \m{z}_1}, \twon{\widehat{\m{z}} - \m{z}_2} } \\
		& \ge \inf_{\widehat{\m{z}}} \frac{1}{2} \sbra{ \twon{\widehat{\m{z}} - \m{z}_1} + \twon{\widehat{\m{z}} - \m{z}_2} } \\
		& \ge \frac{1}{2} \twon{\m{z}_1 - \m{z}_2} 
		= \frac{1}{2} \twon{\m{z}_1 - \m{\xi}_1} 
		= \frac{\varepsilon}{\sqrt{2}},
	\end{split}
}
where the last equality holds because $r < \min\{n_1, n_2\}$ implies $2r \le N = n_1+n_2-1$ and thus the supports of $\m{z}_1$ and $\m{\xi}_1$ are disjoint.

\section{Proof of Lemma \ref{lem:PTtPTt}} \label{subsec:PTtPTt}
Note that $\cM$ is self-adjoint on $T$. To show the first line of \eqref{eq:AinvZ}, it is equivalent to show the spectrum of $\cM$ lies in $\mbra{ 1-\delta_{2r}, 1+\delta_{2r} }$.
Specifically, the eigenvalues of $\cM$ admit the variational characterization
\equ{
	\lambda_{\max}(\cM)
	:= \sup_{\frobn{\m{Z}}=1, \m{Z}\in T} \left\langle \cM \m{Z}, \m{Z} \right\rangle,
	\qquad
	\lambda_{\min}(\cM)
	:= \inf_{\frobn{\m{Z}}=1, \m{Z}\in T} \left\langle \cM \m{Z}, \m{Z} \right\rangle. \label{eq:lambda}
}
For any $\m{Z}\in T$, we have
\equ{
	\begin{split}
		\langle \cM \m{Z}, \m{Z} \rangle
		&= \left \langle \cP_T \sbra{ \cG \sbra{\m{B}^H\m{B} - \m{I}} \cG^* + \cI } \cP_T \m{Z}, \m{Z} \right \rangle \\
		&= \frobn{\m{Z}}^2 
		+ \left \langle (\m{B}^H\m{B} - \m{I}) \cG^* \m{Z}, \; \cG^* \m{Z} \right \rangle \\
		&= \frobn{\m{Z}}^2 + \twon{\m{B}\cG^* \m{Z}}^2 - \twon{\cG^* \m{Z}}^2,
	\end{split} \label{eq:<AZZ>}
}
where we use $\cP_T\m{Z}=\m{Z}$.
Since $\m{B}$ satisfies the LRIP of rank $2r$ and $\rank(\m{Z}) \le 2r$, we further have
\equs{
	(1-\delta_{2r}) \frobn{\m{Z}}^2
	\le \langle \cM \m{Z}, \m{Z} \rangle
	\le (1+\delta_{2r}) \frobn{\m{Z}}^2.
}
Substituting it into \eqref{eq:lambda} gives 
\equs{
	1-\delta_{2r} \le \lambda_{\min}(\cM)
	\le \lambda_{\max}(\cM)
	\le 1+\delta_{2r}. 
}
Since $\delta_{2r}<1$, it follows that $\lambda_{\min}(\cM)>0$, hence $\cM$ is positive definite on $T$. 

It follows that $\cM: T \to T$ is invertible.
For any $\m{X}\in T$, let $\m{Z} = \cM^{-1}\m{X} \in T$. Substituting $\m{Z}$ into the first line of \eqref{eq:AinvZ} gives the second line of \eqref{eq:AinvZ}.

\section{Proof of Theorem \ref{thm:local}} \label{subsec:local}

\begin{lemma}[{\cite[Lemma 10]{wu2026lifted}}] \label{lem:PTtI}
	Let $T$ be the tangent space of a rank-$r_1$ matrix manifold. Let $\m{Z}$ be another rank-$r_2$ matrix. Then, 
	\equs{
		\frobn{\cP_T \cG \sbra{ \m{B}^H \m{B} - \m{I} } \cG^*\m{Z} } \le \delta_{2r_1+r_2} \frobn{\m{Z}}.
	}
\end{lemma}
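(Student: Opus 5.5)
The plan is a duality argument that turns the operator norm into a bilinear form controlled by the LRIP via polarization. Write $\cE := \cG\sbra{\m{B}^H\m{B}-\m{I}}\cG^*$. This is self-adjoint on $\bC^{n_1\times n_2}$, and for any $\m{Y}$,
\equs{
	\left\langle \cE \m{Y}, \m{Y}\right\rangle = \twon{\m{B}\cG^*\m{Y}}^2 - \twon{\cG^*\m{Y}}^2 .
}
Definition \ref{def:LRIP} therefore says $\abs{\langle \cE\m{Y},\m{Y}\rangle}\le \delta_k\frobn{\m{Y}}^2$ whenever $\rank(\m{Y})\le k$.

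\textbf{Step 1 (duality).} Since $\cP_T$ is an orthogonal projection,
\equs{
	\frobn{\cP_T\cE\m{Z}} = \sup_{\m{W}\in T,\ \frobn{\m{W}}=1} \abs{\left\langle \cE\m{Z}, \m{W}\right\rangle}.
}
Every $\m{W}\in T$ has the form $\m{U}\m{C}_2^H+\m{C}_1\m{V}^H$ by \eqref{eq:Tangentspace}, so $\rank(\m{W})\le 2r_1$.

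\textbf{Step 2 (polarization).} Fix $\m{W}\in T$ and $\m{Z}$, and normalize so that $\frobn{\m{W}}=\frobn{\m{Z}}=1$; the general case follows by homogeneity. Any combination $\m{Z}+\theta\m{W}$ with $\theta\in\bC$, $\abs{\theta}=1$, has rank at most $2r_1+r_2$. The complex polarization identity for the Hermitian form $\langle\cE\cdot,\cdot\rangle$ reads
\equs{
	\left\langle \cE\m{Z},\m{W}\right\rangle = \frac14\sum_{k=0}^{3} \mathrm{i}^k \left\langle \cE(\m{Z}+\mathrm{i}^k\m{W}), \m{Z}+\mathrm{i}^k\m{W}\right\rangle .
}
Plugging the LRIP bound into each of the four terms directly gives $\abs{\langle\cE\m{Z},\m{W}\rangle}\le \delta\,(\frobn{\m{Z}}^2+\frobn{\m{W}}^2)=2\delta$, with $\delta=\delta_{2r_1+r_2}$. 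This is off by a factor of 2.

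\textbf{Step 3 (removing the factor 2).} This is the main obstacle, and I would handle it by choosing a phase rather than averaging all four terms. Pick $\theta$ with $\abs{\theta}=1$ so that $\bar\theta\langle\cE\m{Z},\m{W}\rangle$ is real and nonnegative. Since $\cE$ is Hermitian, only the real part survives in the difference:
\equs{
	\left\langle\cE(\m{Z}+\theta\m{W}),\m{Z}+\theta\m{W}\right\rangle - \left\langle\cE(\m{Z}-\theta\m{W}),\m{Z}-\theta\m{W}\right\rangle = 4\,\mathrm{Re}\bigl(\bar\theta\left\langle\cE\m{Z},\m{W}\right\rangle\bigr) = 4\abs{\left\langle\cE\m{Z},\m{W}\right\rangle}.
}
Both matrices $\m{Z}\pm\theta\m{W}$ have rank at most $2r_1+r_2$. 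So by LRIP and the parallelogram law, the left side is at most
\equs{
	\delta\bigl(\frobn{\m{Z}+\theta\m{W}}^2+\frobn{\m{Z}-\theta\m{W}}^2\bigr) = 2\delta\bigl(\frobn{\m{Z}}^2+\frobn{\m{W}}^2\bigr) = 4\delta .
}
Hence $\abs{\langle\cE\m{Z},\m{W}\rangle}\le\delta$ under the normalization $\frobn{\m{W}}=\frobn{\m{Z}}=1$.

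\textbf{Step 4 (conclusion).} Rescaling gives $\abs{\langle\cE\m{Z},\m{W}\rangle}\le\delta\frobn{\m{Z}}\frobn{\m{W}}$ for all $\m{W}\in T$. Taking the supremum over unit-norm $\m{W}\in T$ in Step 1 yields
\equs{
	\frobn{\cP_T\cE\m{Z}}\le\delta_{2r_1+r_2}\frobn{\m{Z}} .
}
The only delicate points are the normalization-before-polarization trick in Step 3 and checking that every combination used stays within rank $2r_1+r_2$.
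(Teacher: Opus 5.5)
Your proof is correct. The duality reduction $\frobn{\cP_T\mathcal{E}\m{Z}}=\sup_{\m{W}\in T,\,\frobn{\m{W}}=1}\abs{\langle\mathcal{E}\m{Z},\m{W}\rangle}$ with $\mathcal{E}=\cG(\m{B}^H\m{B}-\m{I})\cG^*$, the rank count $\rank(\m{Z}\pm\theta\m{W})\le 2r_1+r_2$, and the phase-aligned two-term polarization together give exactly $\delta_{2r_1+r_2}\frobn{\m{Z}}$; the two-term polarization cancels the diagonal terms and so avoids the factor $2$ lost in the four-term identity.

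The paper gives no proof of this lemma and simply imports it from \cite[Lemma 10]{wu2026lifted}. Your argument is the standard restricted-isometry cross-term argument behind such bounds, so it works as a self-contained substitute for the citation. Since it only uses $\rank(\m{Z}\pm\theta\m{W})\le 2r_1+r_2$, it actually yields $\abs{\langle\mathcal{E}\m{Z},\m{W}\rangle}\le\delta_{2r_1+r_2}\frobn{\m{Z}}\frobn{\m{W}}$ for every $\m{W}$ of rank at most $2r_1$, not only for $\m{W}\in T$.
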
	

\begin{lemma}[{\cite[Lemma 4.1]{wei2016guarantees}\cite[Lemma 12]{cai2025fast}}] \label{lem:wei}
	Let $T$ be the tangent space of the rank-$r$ matrix manifold at $\m{X}$. Let $\m{Z}$ be another rank-$r$ matrix. Then,
	\equs{
		\frobn{ \sbra{\cI - \cP_T} \m{Z} } \le \frac{\frobn{\m{X} - \m{Z}}^2}{\sigma_{\text{min}}\sbra{ \m{Z} }}, \qquad \twon{ \sbra{\cI - \cP_T}  \m{Z} } \le \frac{\twon{\m{X} - \m{Z} }^2}{\sigma_{\text{min}}\sbra{ \m{Z} }}.
	}
\end{lemma}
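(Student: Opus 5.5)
The plan is a direct algebraic factorization; no probabilistic input or LRIP is needed. Write $\m{X} = \m{U}\m{\Sigma}\m{V}^H$ (compact SVD, rank $r$). The projection formula from Section~\ref{subsubsec:CG} gives $\sbra{\cI - \cP_T}\m{Z} = (\m{I}-\m{U}\m{U}^H)\,\m{Z}\,(\m{I}-\m{V}\m{V}^H)$. The key observation is that $(\m{I}-\m{U}\m{U}^H)\m{X} = \m{0}$ and $\m{X}(\m{I}-\m{V}\m{V}^H)=\m{0}$. Hence $\m{Z}$ may be replaced by $\m{Z}-\m{X}$ on either side, but only one copy of $\m{Z}-\m{X}$ can be extracted directly. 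To obtain a quadratic bound, we split $\m{Z}$ into two factors.

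Let $\m{Z} = \m{U}_Z\m{\Sigma}_Z\m{V}_Z^H$ be its compact SVD, with $\m{\Sigma}_Z$ an invertible $r\times r$ matrix. We use the identity
\equs{
\m{Z} = \m{Z}\,\m{V}_Z\m{\Sigma}_Z^{-1}\m{U}_Z^H\,\m{Z}.
}
It holds because $\m{U}_Z\m{\Sigma}_Z\m{\Sigma}_Z^{-1}\m{\Sigma}_Z\m{V}_Z^H = \m{Z}$. Substituting this identity and then the key observation into each outer factor gives
\equs{
\sbra{\cI - \cP_T}\m{Z}
= \big[(\m{I}-\m{U}\m{U}^H)(\m{Z}-\m{X})\m{V}_Z\big]\,\m{\Sigma}_Z^{-1}\,\big[\m{U}_Z^H(\m{Z}-\m{X})(\m{I}-\m{V}\m{V}^H)\big].
}

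Next we bound the product using $\frobn{\m{A}\m{C}\m{B}}\le \frobn{\m{A}}\twon{\m{C}}\twon{\m{B}}$. The projections $\m{I}-\m{U}\m{U}^H$ and $\m{I}-\m{V}\m{V}^H$ have norm at most one, and $\m{V}_Z$ and $\m{U}_Z^H$ have orthonormal columns and rows, respectively. We also use $\twon{\m{\Sigma}_Z^{-1}} = 1/\sigma_{\min}(\m{Z})$. This gives
\equs{
\frobn{\sbra{\cI - \cP_T}\m{Z}} \le \frac{\frobn{\m{X}-\m{Z}}\,\twon{\m{X}-\m{Z}}}{\sigma_{\min}(\m{Z})} \le \frac{\frobn{\m{X}-\m{Z}}^2}{\sigma_{\min}(\m{Z})}.
}
The spectral-norm version follows identically from submultiplicativity, $\twon{\m{A}\m{C}\m{B}}\le\twon{\m{A}}\twon{\m{C}}\twon{\m{B}}$.

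The only nonroutine step is the factorization through $\m{\Sigma}_Z^{-1}$, which lets the difference $\m{Z}-\m{X}$ appear twice. Everything else is norm bookkeeping. We need $\sigma_{\min}(\m{Z})>0$, meaning $\m{Z}$ has rank exactly $r$, so that $\m{\Sigma}_Z$ is invertible. If $\m{Z}$ had lower rank, the right-hand side would be interpreted as $+\infty$.
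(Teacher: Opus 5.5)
Your proof is correct and follows essentially the argument behind the cited result (the paper itself gives no proof and only cites \cite{wei2016guarantees,cai2025fast}). Your identity $\m{Z} = \m{Z}\m{V}_Z\m{\Sigma}_Z^{-1}\m{U}_Z^H\m{Z}$ is just $\m{Z}=\m{Z}\m{Z}^{\dagger}\m{Z}$ with the pseudo-inverse, which is the device used there to extract two copies of $\m{Z}-\m{X}$ from $(\m{I}-\m{U}\m{U}^H)\m{Z}(\m{I}-\m{V}\m{V}^H)$; your bookkeeping even gives the slightly sharper intermediate bound $\frobn{\m{X}-\m{Z}}\,\twon{\m{X}-\m{Z}}/\sigma_{\min}(\m{Z})$.
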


\begin{proof}[Proof of Theorem \ref{thm:local}] 
	We have
	\equ{
		\begin{split}
			\frobn{\m{X}_{t+1} - \cG\m{z}_\star} 
			& \le \frobn{\m{X}_{t+1} - \m{W}_{t+1}} + \frobn{\m{W}_{t+1} - \cG\m{z}_\star} \\ 
			& \le 2 \frobn{\m{W}_{t+1} - \cG\m{z}_\star} \\
			& \le 2 \frobn{ \m{W}_{t+1} - \cP_{T_t} \cG\m{z}_\star } + 2 \frobn{\sbra{\cI-\cP_{T_t}}\cG\m{z}_\star},
		\end{split} \label{eq:Lt+1}
	}
	where the second inequality holds since both $\m{X}_{t+1}$ and $\cG\m{z}_\star$ are rank-$r$ and $\m{X}_{t+1}$ is the best rank-$r$ approximation to $\m{W}_{t+1}$.
	
	Denote $\cM_t := \cP_{T_t}\sbra{\cG\sbra{\m{B}^H\m{B} - \m{I}}\cG^* + \cI } \cP_{T_t} $. It follows from the optimality condition \eqref{eq:linearEq} of problem \eqref{eq:subproblem} that
	$
		\cM_t \sbra{\m{W}_{t+1}} = \cP_{T_t} \cG \m{B}^H \m{y}.
	$
	As shown in Lemma \ref{lem:PTtPTt}, $\cM_t$ is invertible on $T_t$. Hence, we have
	\equ{
		\m{W}_{t+1} = \cM_t^{-1} \cP_{T_t} \cG \m{B}^H \m{y}. \label{eq:Wt+1}
	}
	Moreover, we have the decomposition
	\equ{
		\begin{split}
			\cP_{T_t} \cG \m{B}^H \m{y} 
			& = \cP_{T_t} \cG \m{B}^H \sbra{ \m{B}\m{z}_\star + \m{\xi} } \\
			& = \cM_t \cP_{T_t} \cG\m{z}_\star 
			+ \cP_{T_t} \cG \sbra{\m{B}^H\m{B} - \m{I}} \cG^*  \sbra{ \cI - \cP_{T_t} } \cG\m{z}_\star 
			+ \cP_{T_t} \cG \m{B}^H \m{\xi}.
		\end{split} \label{eq:PTtgBHy}
	}
	
	By substituting \eqref{eq:PTtgBHy} into \eqref{eq:Wt+1}, and subsequently into \eqref{eq:Lt+1}, we obtain
	\equ{
		\begin{split}
			\frobn{\m{X}_{t+1} - \cG\m{z}_\star}
			& \le 2 \frobn{ \cM_t^{-1} \mbra{ \cP_{T_t} \cG \sbra{\m{B}^H\m{B} - \m{I}} \cG^*  \sbra{ \cI - \cP_{T_t} } \cG\m{z}_\star + \cP_{T_t} \cG \m{B}^H \m{\xi} }  }  \\
			& \quad + 2 \frobn{\sbra{\cI-\cP_{T_t}}\cG\m{z}_\star} \\
			& \le \frac{2}{1-\delta_{2r}} \sbra{ \frobn{ \cP_{T_t} \cG \sbra{\m{B}^H\m{B} - \m{I}} \cG^*  \sbra{ \cI - \cP_{T_t}} \cG\m{z}_\star} + \frobn{\cP_{T_t} \cG \m{B}^H \m{\xi}} } \\
			& \quad  + 2 \frobn{\sbra{\cI-\cP_{T_t}}\cG\m{z}_\star} \\
			& \le \frac{2}{1-\delta_{2r}} \sbra{ \delta_{3r} \frobn{ \sbra{ \cI - \cP_{T_t} } \cG\m{z}_\star} + \frobn{ \cP_{T_t} \cG \m{B}^H \m{\xi}} } \\
			& \quad + 2 \frobn{\sbra{\cI-\cP_{T_t}}\cG\m{z}_\star} \\
			& \le \sbra{ \frac{2\delta_{3r}}{1-\delta_{2r}} + 2}  \frac{ \frobn{\m{X}_t - \cG\m{z}_\star}^2 }{\sigma_{\text{min}}\sbra{\cG\m{z}_\star } } 
			+ \frac{4\sqrt{2r}}{1-\delta_{2r}} \twon{\cG \m{B}^H \m{\xi}}.
		\end{split}  \label{eq:12}
	}
	Here the second inequality follows from Lemma \ref{lem:PTtPTt}, the third one follows from Lemma \ref{lem:PTtI} and $\rank\sbra{(\cI-\cP_{T_t})\cG\m{z}_\star}  = \rank \sbra{ \sbra{\m{I}-\m{U}_t\m{U}_t^H } \cG\m{z}_\star \sbra{\m{I}-\m{V}_t\m{V}_t^H } } \le r$, and the last one follows from Lemma \ref{lem:wei} and
	\equs{
		\frobn{\cP_{T_t} \cG \m{B}^H \m{\xi}} 
		\le \sqrt{2r} \, \twon{\cP_{T_t} \cG \m{B}^H \m{\xi}}
		\le 2\sqrt{2r} \, \twon{\cG \m{B}^H \m{\xi}},
	}
	since $\rank(\cP_{T_t}\m{X})\le 2r$ and
	$\twon{\cP_{T_t} \m{X} } = \twon{ \m{U}_t\m{U}_t^H \m{X} \sbra{ \m{I} - \m{V}_t\m{V}^H_t } + \m{X} \m{V}_t\m{V}^H_t } \le 2 \twon{\m{X}}$ for any $\m{X}$.
	
	Denote $e_t := \frobn{\m{X}_t - \cG\m{z}_\star}$, $a := \frac{4}{\sigma_{\min}(\cG\m{z}_\star)}$, and $b := 8\sqrt{2r} \twon{\cG \m{B}^H \m{\xi}}$. It follows from \eqref{eq:12} and $\delta_{2r}\le \delta_{3r} < \frac{1}{2}$ that
	$
		e_{t+1} \le a e_t^2 + b
	$.
	
	If $e_t \ge \sqrt{b/a}$, then $a e_t^2 \ge b$, and hence $e_{t+1} \le 2a e_t^2$. By induction, we have
	$
	e_t \le \frac{1}{2a} \sbra{2a e_0}^{2^t}
	$.
	Under the initialization condition $e_0 \le \frac{1}{16} \sigma_{\min}(\cG\m{z}_\star) = \frac{1}{4a}$, we further obtain
	$
	e_t \le \frac{1}{2a} 2^{-2^t}.
	$
	
	As $e_t$ decreases, eventually $e_t < \sqrt{b/a}$, equivalently, $a e_t^2 < b$. This yields 
	$e_{t+1} \le 2b$. 
	Consequently, combining the two regimes, we conclude that
	$e_t \le \frac{1}{2a} 2^{-2^t} + 2b$ for all $ t \ge 0$.
\end{proof}

\section{Proof of Lemma \ref{lem:noise}} \label{append:noise}
	Since $\m{B}$ has i.i.d.\ $\cC\cN(0,1/m)$ entries and is independent of $\m{\xi}$, conditioning on $\m{\xi}$ gives $
		\m{B}^H\m{\xi}\mid\m{\xi}
		\sim \cC\cN\left(
		\m{0},
		\frac{\twon{\m{\xi}}^2}{m}\m{I}_N
		\right)$.
	Consequently, conditionally on $\m{\xi}$, the random matrix $\cG\m{B}^H\m{\xi}$ has the same distribution as $\frac{\twon{\m{\xi}}}{\sqrt{m}}\cG\m{g}$, where $\m{g}\sim\cC\cN(\m{0},\m{I}_N)$ is independent of $\m{\xi}$.
	
	The Gaussian Hankel norm bound in \cite[Lemma 4]{wu2026lifted} gives
	\equs{
		\bE\twon{\cG\m{g}}
		\le 2\sqrt{(c_s+\log N)\log N}.
	}
	Moreover, for any $\m{g}_1,\m{g}_2\in\bC^N$,
	\equs{
		\abs{\twon{\cG\m{g}_1}-\twon{\cG\m{g}_2}}
		\le \twon{\cG(\m{g}_1-\m{g}_2)}
		\le \frobn{\cG(\m{g}_1-\m{g}_2)}
		= \twon{\m{g}_1-\m{g}_2}.
	}
	Thus, $\m{g}\mapsto\twon{\cG\m{g}}$ is $1$-Lipschitz.
	Applying Gaussian concentration \cite[Theorem 5.2.2]{vershynin2018high} to the real and imaginary coordinates yields
	\equs{
		\bP\left\{
		\twon{\cG\m{g}}
		>2\sqrt{(c_s+\log N)\log N}+C_0\sqrt{t}
		\right\}
		\le 2\exp(-t),
		\qquad t\ge0,
	}
	where $C_0\ge1$ is a universal constant.

	Taking $t=(c_s+\log N)\log N+\log 4$ and using $(c_s+\log N)\log N\ge\log 2$, we obtain
	\equ{
		\bP\left\{
		\twon{\cG\m{B}^H\m{\xi}}
		>4C_0\twon{\m{\xi}}
		\sqrt{\frac{(c_s+\log N)\log N}{m}}
		\,\middle|\,\m{\xi}
		\right\}
		\le \frac12 N^{-(c_s+\log N)}.
		\label{eq:gBxi}
	}
	
	Similarly, Gaussian concentration for the Euclidean norm, together with $\bE\twon{\m{\xi}}\le\sigma$, gives
	\equs{
		\bP\left\{
		\twon{\m{\xi}}
		>\sigma\left(1+C_0\sqrt{\frac{t}{m}}\right)
		\right\}
		\le 2\exp(-t),
		\qquad t\ge0.
	}
	Since $m\ge(c_s+\log N)\log N\ge\log 2$, taking $t=(c_s+\log N)\log N+\log 4$ yields
	\equ{
		\bP\lbra{\twon{\m{\xi}}>3C_0\sigma}
		\le \frac12 N^{-(c_s+\log N)}.
		\label{eq:noise_energy}
	}
	
	Integrating \eqref{eq:gBxi} with respect to $\m{\xi}$ and using \eqref{eq:noise_energy}, we conclude that
	\equs{
		\begin{split}
			& \bP\left\{
			\twon{\cG\m{B}^H\m{\xi}}
			>12C_0^2\sigma
			\sqrt{\frac{(c_s+\log N)\log N}{m}}
			\right\} \\
			& \le \bP\lbra{\twon{\m{\xi}}>3C_0\sigma} 
			+\bE_{\m{\xi}}\bP\left\{
			\twon{\cG\m{B}^H\m{\xi}}
			>4C_0\twon{\m{\xi}}
			\sqrt{\frac{(c_s+\log N)\log N}{m}}
			\,\middle|\,\m{\xi}
			\right\} \\
			& \le N^{-(c_s+\log N)}.
		\end{split}
	}
	This proves the claim with the universal constant $C=144C_0^4$.

\section{Proof of Lemma \ref{lem:init1}} \label{append:init1}
Let $\m{Q}\in \bC^{n_1 \times 2r}$ be an unitary matrix spanning the column subspaces of $\m{X}_0$ and $\cG\m{z}_\star$ and $
\cP_{\m Q}(\m X):=\m Q\m Q^H\m X$
be the orthogonal projection onto the column space of $\m Q$. It is shown in the proof of \cite[Lemma 6]{wu2026lifted} that 
$\frobn{\m{X}_0 - \cG\m{z}_{\star}} 
	\le 2 \frobn{ \cP_{\m{Q}} \sbra{\cG \m{B}^H \m{y}} - \cG\m{z}_{\star}}$.
Substituting $\m{y} = \m{B}\m{z}_\star + \m{\xi}$ into it yields
\equs{
	\begin{split}
		\frobn{\m{X}_0 - \cG\m{z}_{\star}} 
		& \le 2 \frobn{ \cP_{\m{Q}} \cG \sbra{ \m{B}^H \m{B} - \m{I} } \cG^* \cP_{\m{Q}} \sbra{ \cG \m{z}_{\star} } + \cP_{\m{Q}} \sbra{\cG \m{B}^H \m{\xi}} } \\
		& \le 2 \frobn{ \cP_{\m{Q}} \cG \sbra{ \m{B}^H \m{B} - \m{I} } \cG^* \cP_{\m{Q}} \sbra{ \cG \m{z}_{\star} } } + 2 \sqrt{2r} \twon{\cG \m{B}^H \m{\xi}},
	\end{split}
}
where the first inequality holds since $\cG\m{z}_\star = \cP_{\m{Q}} \sbra{\cG\m{z}_\star}$, and the second one follows from $\rank\sbra{\cP_{\m Q} \sbra{ \m X } } \le 2r$ and $\twon{ \cP_{\m Q} \sbra{ \m X }  } = \twon{ \m Q\m Q^H\m X } \le \twon{ \m Q\m Q^H } \twon{\m{X}} = \twon{\m{X}}$ for any $\m{X}$.
Applying \cite[Lemma 12]{wu2026lifted}, we obtain \eqref{eq:L0gz}.

\section{Resampled FIHT Algorithm and Proof of Lemma \ref{lem:init2}} \label{append:FIHT}
We partition the measurement index set $\lbra{1,\ldots,m}$ into $T+1$ disjoint subsets $\Omega_0, \ldots, \Omega_T$, each containing $\tilde{m}$ samples. Then we have $m = \sbra{T+1}\tilde{m}$.
Define
\equs{
		\begin{split}
				& \m{B} := \mbra{\m{b}_1,\ldots,\m{b}_m}^H \in \bC^{m\times N}, 
				\qquad \m{B}_t := \sqrt{\frac{m}{\tilde{m}}} \mbra{\m{b}_i^H}_{i\in \Omega_t} \in \bC^{\tilde{m} \times N}, \\ 
				& \m{y}_t := \sqrt{\frac{m}{\tilde{m}}} \mbra{y_i}_{i\in \Omega_t} \in \bC^{\tilde{m}},
				\qquad \qquad \m{\xi}_t := \sqrt{\frac{m}{\tilde{m}}} \mbra{\xi_i}_{i\in \Omega_t} \in \bC^{\tilde{m}}.
			\end{split} 
	}
The resampled FIHT algorithm is reviewed in Algorithm \ref{alg:Init-resamp}, where $\tilde{T}_t$ is the tangent space of the rank-$r$ matrix manifold at $\tilde{\m{X}}_t$.

\begin{algorithm}
	\caption{Initialization via Resampled FIHT \cite{wu2026lifted}}
	\label{alg:Init-resamp}
	\begin{algorithmic}
		\REQUIRE $\m{y}$, $\m{B}$, $r$ \\
		$\tilde{\m{X}}_0 = \cT_r\sbra{ \cG \m{B}_0^H \m{y}_0 }$
		\FOR{$t = 0,1,...,T-1$}
		\STATE $\tilde{\m{z}}_t = \cG^* \tilde{\m{X}}_t$
		\STATE $\tilde{\m{W}}_t = \cP_{\tilde{T}_t} \cG \sbra{ \tilde{\m{z}}_t - \m{B}_{t+1}^H \sbra{\m{B}_{t+1} \tilde{\m{z}}_t - \m{y}_{t+1} } }$
		\STATE $\tilde{\m{X}}_{t+1} = \cT_r ( \tilde{\m{W}}_t )$ \
		\ENDFOR
		\ENSURE $\tilde{\m{X}}_T$
	\end{algorithmic}
\end{algorithm}

\begin{proof}[Proof of Lemma \ref{lem:init2}]
It follows from $\m{y}=\m{B}\m{z}_\star+\m{\xi}$ that $\m{y}_t=\m{B}_t\m{z}_\star+\m{\xi}_t$. The entries of $\m{B}_t$ and $\m{\xi}_t$ are i.i.d.\ $\cC\cN(0,1/\tilde{m})$ and $\cC\cN(0,\sigma^2/\tilde{m})$, respectively. In Algorithm \ref{alg:Init-resamp}, the initialization error can be bounded as
\equs{
	\begin{split}
		\twon{\tilde{\m{X}}_0 - \cG\m{z}_\star}
		& \le 2 \twon{\cG\m{B}_0^H\m{y}_0 - \cG\m{z}_\star} \\
		& \le 2 \twon{\cG\m{B}_0^H\m{B}_0 \m{z}_\star - \cG\m{z}_\star} + 2 \twon{\cG\m{B}_0^H\m{\xi}_0 } \\
		& \le \frac{1}{128} \sigma_{\text{min}} \sbra{\cG \m{z}_\star} + \sigma \sqrt{ \frac{ C (c_s + \log N)\log N} {\tilde{m}} } \\
		& \le \frac{1}{64} \sigma_{\text{min}} \sbra{\cG \m{z}_\star},
	\end{split}
}
where the third inequality follows from \cite[Proof of Lemma 7]{wu2026lifted} as well as Lemma \ref{lem:noise} and holds with probability at least $1 - 3 N^{- (c_s + \log N)}$, and the last one holds under the condition of $\tilde{m}$.
For the iteration, we have
\equs{
	\begin{split}
		& \twon{\tilde{\m{X}}_{t+1} - \cG\m{z}_\star} \\
		& \le 2 \twon{ \cP_{\tilde{T}_t} \cG \mbra{ \tilde{\m{z}}_t - \m{B}^H_{t+1} \sbra{\m{B}_{t+1} \tilde{\m{z}}_t - \m{y}_{t+1}} } - \cG\m{z}_\star } \\
		& \le 2 \twon{ \cP_{\tilde{T}_t} \cG \sbra{ \cG^* \tilde{\m{X}}_t -  \m{B}_{t+1}^H \m{B}_{t+1} \cG^* \tilde{\m{X}}_t +  \m{B}_{t+1}^H \m{B}_{t+1} \m{z}_\star } - \cG \m{z}_\star }  + 2 \twon{ \cP_{\tilde{T}_t} \cG \m{B}_{t+1}^H \m{\xi}_{t+1} } \\ 
		& \le \sbra{ \frac{1}{8} + \frac{2}{\sigma_{\text{min}} \sbra{\cG\m{z}_\star}} \twon{ \tilde{\m{X}}_t - \cG\m{z}_\star } } \twon{ \tilde{\m{X}}_t - \cG\m{z}_\star } + 4 \sigma \sqrt{ \frac{ C (c_s + \log N)\log N} {\tilde{m}} },
	\end{split}
}
where the last inequality follows from \cite[Proof of Lemma 8]{wu2026lifted} as well as Lemma \ref{lem:noise} and holds with probability at least $1 - 2 N^{- (c_s + \log N)}$. The proof is completed by combining the above results.
\end{proof}

\bibliographystyle{siamplain}
\bibliography{SIHT}
\end{document}